\documentclass[pra,onecolumn,groupedaddress,superscriptaddress,amsmath,amssymb,longbibliography]{revtex4-2}
\usepackage{graphicx}
\usepackage{mathrsfs}
\usepackage{amsmath,amssymb,amsthm,bm}
\usepackage[T1]{fontenc}
\usepackage{xcolor}
\usepackage{hyperref}
\usepackage[nameinlink,capitalize]{cleveref}
\hypersetup{colorlinks=true,linkcolor=purple,citecolor=purple,urlcolor=blue}

\newtheorem{proposition}{Proposition}
\newtheorem{theorem}{Theorem}
\newtheorem{lemma}{Lemma}
\newtheorem{corollary}{Corollary}
\theoremstyle{definition}
\newtheorem{remark}{Remark}

\DeclareMathOperator{\Tr}{Tr}
\DeclareMathOperator{\spanop}{span}
\DeclareMathOperator{\Ran}{Ran}

\newcommand{\ket}[1]{|#1\rangle}
\newcommand{\bra}[1]{\langle #1|}

\usepackage{orcidlink}
\begin{document}

\title{Sparse anisotropic positive maps for qutrit entanglement: exact indecomposability and PPT geometry}

\author{Massimiliano F. Sacchi\,\orcidlink{0000-0002-8909-2196}}
\email[Massimiliano F. Sacchi: ]{massimiliano.sacchi@cnr.it}
	\affiliation{CNR-Istituto di Fotonica e Nanotecnologie, Piazza Leonardo da Vinci 32, I-20133, Milano, Italy}
	\affiliation{Dipartimento di Fisica, Università degli Studi di
          Pavia, Via Agostino Bassi 6, I-27100, Pavia, Italy}


\begin{abstract}
Positive but not completely positive maps provide one of the most direct
ways to detect entanglement beyond the positive-partial-transpose (PPT)
criterion.  We introduce and analyze an exactly solvable two-parameter
family of sparse bistochastic positive maps on qutrits, in which two
coherence channels are independently tuned by parameters $w$ and $z$.
The sparse structure makes the full phase diagram analytic: positivity
holds exactly on the square $0\le w,z\le2/3$, complete positivity on the
smaller square $0\le w,z\le1/3$, and decomposability is lost precisely
outside a quarter circle in the corner $w,z\ge1/3$.  The
indecomposable region is certified by explicit PPT entangled states
adapted to the same witness geometry.  At the endpoint
$W_*=W(2/3,2/3)$ we construct a four-parameter family of PPT edge states
of rank type $(5,5)$, derive their analytic detection region, and show
that the corresponding rays are exposed faces of the PPT cone.  Finally,
although $W_*$ is not optimal, we give an explicit optimal refinement
whose detection region on this family is strictly larger.  The result is
an analytically tractable qutrit setting in which positivity,
indecomposability, PPT entanglement, optimality, and exposed convex
geometry can be studied in a single framework.
\end{abstract}

\maketitle

\section{Introduction}
\label{sec:intro}
Entanglement is a defining nonclassical feature of quantum theory and a central
resource for quantum information processing.  Deciding whether a mixed
bipartite state is separable remains difficult, and positive maps provide one
of the most effective formulations of the problem: a state is separable if and
only if it remains positive under $\mathrm{id}\otimes\Phi$ for every positive
linear map $\Phi$ \cite{Horodecki1996,Horodecki2009}.  Through the
Choi--Jamio\l{}kowski correspondence, this criterion translates questions
about maps into spectral and block-positivity questions for bipartite
operators \cite{Jamiolkowski1972,Choi1975}.

Indecomposable positive maps are especially important because
decomposable maps cannot detect PPT entangled states, whereas
indecomposable maps can
\cite{Peres1996,Horodecki1996,Terhal2000,Lewenstein2000}.  In low
dimensions such maps are excluded by the results of St\o rmer and
Woronowicz \cite{Woronowicz1976,Stormer2013}, while in dimension
$3\otimes3$ they appear already in the Choi map and its
generalizations \cite{ChoKyeLee1992}.  The subsequent Ha--Kye analysis
of Choi-type witnesses, exposed maps, and PPT edge states shows that
positive maps give a concrete setting in which map geometry and state
geometry can be studied simultaneously
\cite{HaKye2004,HaKye2005,HaKyeExposed2011,HaKye2012,ChruscinskiSarbicki2014}.
Complementary developments relate positive maps and entanglement
witnesses to tensor-cone characterizations, spectral and geometric
constructions, positive biquadratic forms, elementary-operator
criteria, merging procedures, and tensor-power decomposability
\cite{MajewskiMarciniak2001,Kossakowski2003,ChruscinskiKossakowski2007,ChruscinskiKossakowski2009,SkowronekZyczkowski2009,Hou2010,MarciniakRutkowski2017,MullerHermes2018}. More
recent developments also use positive maps to build practical
separability criteria, entanglement witnesses, and
entanglement-estimation methods based on symmetric measurements,
moment criteria, realignment-like witness constructions, and symmetric
positive-map
constructions~\cite{LiYaoFeiFanMa2024,MallickMaityGangulyMajumdar2025,JannesaryKarimipourChruscinski2025,Rico2026}.
For quantum-information applications it is useful to have examples that are
not only indecomposable but also analytically controllable.  Such examples
serve as benchmarks for numerical searches for positive maps, for testing
entanglement criteria on PPT states, and for separating distinct notions such
as positivity, complete positivity, decomposability, optimality, and
exposedness.  The challenge is that these properties are usually governed by
different convex cones and are rarely accessible in closed form within one
and the same family.

The present paper provides such an exactly solvable construction.  We analyze a
family of sparse positive maps for qutrits introduced in
Ref.~\cite{CompanionSparseMaps}, focusing on the anisotropic regime in which two
surviving coherences have independent weights.  The Choi matrix has one
suppressed coherence sector and two active off-diagonal sectors.  This simple
pattern is the main physical and mathematical simplification: the map probes two
selected coherences of a bipartite qutrit system, while the missing sector makes
the positivity and PPT geometry sufficiently rigid to be solved exactly.

In a uniform bistochastic slice we obtain the following results.  First, the
positive maps occupy the square $[0,2/3]\times[0,2/3]$, whereas the completely
positive maps occupy the smaller square $[0,1/3]\times[0,1/3]$.  Second, the
transition from decomposable to indecomposable positive maps is exact: every
positive map with either $w\le1/3$ or $z\le1/3$ is decomposable, while in the
corner $w,z\ge1/3$ the boundary is the quarter circle
\begin{equation}
\left(w-\frac13\right)^2+\left(z-\frac13\right)^2=\frac19\;.\label{uno}
\end{equation}
Third, the indecomposable region is detected by analytic PPT entangled states
adapted to the same two coherence channels.  Fourth, at the endpoint witness
$W_*=W(2/3,2/3)$ we construct a four-parameter family of PPT edge states of rank
type $(5,5)$, determine their detection region, and prove that the corresponding
PPT-state rays are exposed.  Finally, we show that $W_*$ is nonoptimal and give
an explicit optimal refinement with a strictly larger analytic detection region
on the constructed states.

The construction therefore gives more than a phase diagram.  It gives a
controlled witness--state correspondence in which the same sparse qutrit
structure determines the positivity region, the decomposability threshold, the
PPT entangled states detected by the witness, and the relevant faces of the PPT
cone.

The paper is organized as follows.  In Sec.~\ref{sec:theory} we recall
the basic notions of positive maps, decomposability, and Choi
matrices.  In Sec.~\ref{sec:family} we introduce the qutrit 
family of maps and specialize to the uniform bistochastic slice.
Section~\ref{sec:positivity} gives the positivity region, while
Sec.~\ref{sec:decomposability} establishes the decomposability
threshold using explicit PPT witnesses.  In Sec.~\ref{sec:pptstates}
we construct witness-adapted PPT edge-state families and analyze their
detection regions.  Finally, Sec.~\ref{sec:wstar} collects the
witness-side properties of the endpoint map, including its dual face,
nonoptimality, and an optimal refinement with a larger detection
region.
\section{Preliminaries and notation}
\label{sec:theory}

We recall the basic notions for Hermiticity-preserving linear maps between
matrix algebras and their relation to bipartite entanglement.

Let $M_d$ denote the algebra of complex $d\times d$ matrices, and let
$\Phi:M_d \to M_d$ be a linear map. We consider Hermiticity-preserving
linear maps, satisfying $\Phi[X^\dag ] =\Phi[X]^\dag $.  Among such
maps, the following classes play a central role:
\begin{enumerate}
\item $\Phi$ is \emph{positive} if $\Phi[X]\succeq0$ for every positive
semidefinite operator $X\in M_d$;
\item $\Phi$ is \emph{$k$-positive} if the ampliation $\mathrm{id}_k\otimes\Phi$ is positive on
$M_k \otimes M_d$; this hierarchy is also connected with Schmidt-number
witnesses and refined entanglement classification \cite{SanperaBrussLewenstein2001};
\item $\Phi$ is \emph{completely positive} if it is $k$-positive for
  every $k\ge1$;
\item $\Phi$ is \emph{completely co-positive} if $\Phi\circ {\cal T}$ is completely
positive, where $\mathcal T$ denotes transposition on $M_d$;
\item $\Phi$ is \emph{decomposable} if it can be written as the sum of a
completely positive map and a completely co-positive one.
\end{enumerate}

The main tool is the \emph{Choi matrix}
\begin{equation}
\label{eq:ChoiM}
C_\Phi=(\mathrm{id}_d\otimes\Phi)\ket{\psi}\bra{\psi}
=\sum_{n,m=1}^{d}\ket{n}\bra{m}\otimes\Phi[\ket{n}\bra{m}],
\end{equation}
where
\begin{equation}
\ket{\psi}=\sum_{n=1}^{d}\ket{n}\otimes\ket{n}
\in\mathbb{C}^d\otimes\mathbb{C}^d
\label{eq:sym-vec}
\end{equation}
denotes the unnormalized maximally entangled vector. Choi's theorem states that $\Phi$ is completely positive if
and only if $C_\Phi\succeq0$~\cite{Choi1975}. Equivalently, for maps
on $M_d$, $d$-positivity already implies complete
positivity.~\cite{Choi1975,Jamiolkowski1972}

Ordinary positivity is weaker than complete positivity.  A
Hermiticity-preserving map $\Phi$ is positive if and only if its Choi matrix is
\emph{block-positive}, namely if
\begin{equation}
\bra{\chi}\otimes\bra{\phi}\,C_\Phi\,\ket{\chi}\otimes\ket{\phi}\ge0
\qquad \forall\,\ket{\chi},\ket{\phi}\in\mathbb{C}^d.
\end{equation}
By convexity, this is equivalent to non-negativity on all separable
states.  Conversely, a bipartite state is entangled if and only if it
is detected by some positive but not completely positive
map~\cite{Horodecki1996}, meaning that
$(\mathrm{id}\otimes\Phi)[\rho]$ fails to be positive for at least one
such $\Phi$. In this sense, positive but not completely positive maps
act as entanglement witnesses.

We shall use the following PPT test for indecomposability (throughout
the paper, the partial transposition $\tau$ acts on the second tensor
factor unless explicitly stated otherwise).
\begin{theorem}
\label{thm:ppt-detects-indecomp}
Let $C_\Phi$ be the Choi matrix of a positive map $\Phi$. Suppose
there exists a bipartite state $\rho$ such that $\Tr[C_\Phi\rho]<0$.
Then $\rho$ is necessarily entangled. If, in addition, $\rho$ is PPT,
namely if $\rho^\tau\succeq0$, then $\Phi$ is indecomposable.
\end{theorem}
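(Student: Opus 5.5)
The plan has two parts, one for each claim, and both reduce to sign arguments using the Choi matrix and the Hilbert--Schmidt pairing.

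\emph{Entanglement.} I will argue by contraposition. Suppose $\rho$ is separable, so $\rho=\sum_i p_i\,\ket{\chi_i}\bra{\chi_i}\otimes\ket{\phi_i}\bra{\phi_i}$ with $p_i\ge0$. Linearity of the trace gives
\[
\Tr[C_\Phi\rho]=\sum_i p_i\,\bra{\chi_i}\otimes\bra{\phi_i}C_\Phi\ket{\chi_i}\otimes\ket{\phi_i}.
\]
Each term is nonnegative by the block-positivity characterization of positive maps recalled above, so $\Tr[C_\Phi\rho]\ge0$. This contradicts the hypothesis $\Tr[C_\Phi\rho]<0$, so $\rho$ must be entangled.

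\emph{Indecomposability.} Again I argue by contraposition. Assume $\Phi=\Phi_1+\Phi_2\circ\mathcal T$ with $\Phi_1,\Phi_2$ completely positive. By linearity of the Choi map, $C_\Phi=C_{\Phi_1}+C_{\Phi_2\circ\mathcal T}$, and Choi's theorem gives $C_{\Phi_1}\succeq0$. The key computation is to identify $C_{\Phi_2\circ\mathcal T}$.

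\begin{itemize}
\item Since $\mathcal T[\ket n\bra m]=\ket m\bra n$, we have
\[
C_{\Phi_2\circ\mathcal T}=\sum_{n,m}\ket n\bra m\otimes\Phi_2[\ket m\bra n].
\]
\item This operator is related to $C_{\Phi_2}$ by a partial transposition. On the first factor one gets $C_{\Phi_2}^{\tau_1}$; to match the paper's convention of transposing the second factor, I use $X^{\tau_1}=(X^{\tau_2})^T$.
\item Full transposition preserves positivity, so $C_{\Phi_2\circ\mathcal T}=Q^{\tau}$ for some $Q\succeq0$, namely $Q=C_{\Phi_2}^T$ up to this bookkeeping.
\end{itemize}

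With $P=C_{\Phi_1}\succeq0$, we then have
\[
\Tr[C_\Phi\rho]=\Tr[P\rho]+\Tr[Q^\tau\rho]=\Tr[P\rho]+\Tr[Q\rho^\tau].
\]
The second equality uses the invariance of the Hilbert--Schmidt pairing under partial transposition, $\Tr[A^\tau B]=\Tr[AB^\tau]$, which can be checked directly on product operators. Both terms are traces of products of positive semidefinite operators, since $\rho\succeq0$ and $\rho^\tau\succeq0$ by the PPT assumption. Hence $\Tr[C_\Phi\rho]\ge0$, contradicting the hypothesis, so $\Phi$ is indecomposable.

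The only step needing care is showing that the Choi matrix of a completely co-positive map is a partial transpose of a positive operator, with the transposition on the correct factor. The global-transpose remark handles this.
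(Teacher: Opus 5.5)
Your proposal is correct and follows the same route as the paper. Block-positivity on product vectors gives the entanglement claim, and writing $C_\Phi=P+Q^\tau$ with $P,Q\succeq0$, then using $\Tr[Q^\tau\rho]=\Tr[Q\rho^\tau]$, gives indecomposability. The only addition is that you explicitly check $C_{\Phi_2\circ\mathcal T}=(C_{\Phi_2}^T)^{\tau}$ with $C_{\Phi_2}^T\succeq0$, which the paper takes for granted when it writes the decomposition $C_\Phi=P+Q^\tau$.
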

\begin{proof}
If $\Phi$ were decomposable, then its Choi matrix could be written as
\begin{equation}
C_\Phi=P+Q^\tau ,
\label{eq:decomp_choi}
\end{equation}
with $P,Q\succeq0$. For every PPT state $\rho$ one would then have
\begin{equation}
\Tr[C_\Phi\rho]=\Tr[P\rho]+\Tr[Q^\tau\rho]
=\Tr[P\rho]+\Tr[Q\rho^\tau]\ge0,
\label{eq:ppt_test_decomp}
\end{equation}
because both $\rho$ and $\rho^\tau$ are positive semidefinite. Hence the
existence of a PPT state with negative expectation value rules out every
decomposition of the form~\eqref{eq:decomp_choi}. The first assertion follows
from block-positivity of $C_\Phi$ on separable states.
\end{proof}

\section{Qutrit family and uniform bistochastic slice}
\label{sec:family}
We now specialize the sparse qutrit maps of Ref.~\cite{CompanionSparseMaps} to
the anisotropic regime $w\neq z$ and uniform bistochastic slice. Each map acts on $X\in M_3$ as
\begin{equation}
\Phi[X]=
\begin{pmatrix}
 aX_{11}+dX_{22}+bX_{33} & w\,X_{21} & z\,X_{13}\\
 w\,X_{12} & bX_{11}+aX_{22}+dX_{33} & 0\\
 z\,X_{31} & 0 & dX_{11}+bX_{22}+aX_{33}
\end{pmatrix},
\label{eq:map}
\end{equation}
with parameters $a,b,d\ge0$ and $w,z\ge0$. The associated Choi matrix has two
nontrivial $2\times2$ blocks and five one-dimensional sectors, so its spectrum
can be read off explicitly as~\cite{CompanionSparseMaps}
\begin{equation}
\operatorname{spec}(C_\Phi)=
\left\{
 a+z,\ a-z,\ a,\ b,\ b,\ d,\ d,
 \frac{b+d+\sqrt{(b-d)^2+4w^2}}{2},
 \frac{b+d-\sqrt{(b-d)^2+4w^2}}{2}
\right\}.
\label{eq:choi-spectrum}
\end{equation}
Consequently, complete positivity is equivalent to
\begin{equation}
a\ge z,
\qquad b\ge0,
\qquad d\ge0,
\qquad bd\ge w^2.
\label{eq:cp-general}
\end{equation}
Numerical evidence for indecomposable members in this sparse family
was originally reported in \cite{arx}.

Reference~\cite{CompanionSparseMaps} also shows that positivity of
\eqref{eq:map} is equivalent to non-negativity of a Hermitian
biquadratic form, or equivalently to positivity of a $3\times3$
Hermitian matrix $M(x)$ for all $x\neq0$. After passing to simplex
variables $u_i\ge0$ with $u_1+u_2+u_3=1$, the problem reduces to the
sign of a single scalar function.  This reduction is particularly
effective in the uniform bistochastic slice
\begin{equation}
a=b=d=\frac13\;,
\end{equation}
where the maps are both trace-preserving and unital, namely $\Tr
[\Phi [X]]=\Tr [X]$ and $\Phi (I)=I$. From now on we restrict to this
case, and we denote the maps by $\Phi(w,z)$ and their Choi matrices by
$W(w,z)$. We use the ordered product basis
\begin{equation}
e_1=|11\rangle,\ e_2=|12\rangle,\ e_3=|13\rangle,\ e_4=|21\rangle,\ e_5=|22\rangle,
\ e_6=|23\rangle,\ e_7=|31\rangle,\ e_8=|32\rangle,\ e_9=|33\rangle.
\end{equation}
We write $e_i e_j^*$ for the rank-one operator $|e_i\rangle\langle e_j|$.
In this basis the active off-diagonal Choi entries are precisely the
two coherence pairs $(e_2,e_4)$ and $(e_1,e_9)$, weighted by $w$ and
$z$, respectively. Using Eq. (\ref{eq:ChoiM}), we can then
write
\begin{eqnarray}
W(w,z)= \frac13 I_9+w (e_2e_4^*+e_4e_2^*)+ z(e_1e_9^*+e_9e_1^*)
=  \frac13 \left(
\begin{array}{ccc|ccc|ccc}
1 & \cdot & \cdot & \cdot & \cdot & \cdot & \cdot & \cdot & 3z \\
\cdot & 1 & \cdot & 3w & \cdot & \cdot & \cdot & \cdot & \cdot \\
\cdot & \cdot & 1 & \cdot & \cdot & \cdot & \cdot & \cdot & \cdot \\
\hline
\cdot & 3w & \cdot & 1 & \cdot & \cdot & \cdot & \cdot & \cdot \\
\cdot & \cdot & \cdot & \cdot & 1 & \cdot & \cdot & \cdot & \cdot \\
\cdot & \cdot & \cdot & \cdot & \cdot & 1 & \cdot & \cdot & \cdot \\
\hline
\cdot & \cdot & \cdot & \cdot & \cdot & \cdot & 1 & \cdot & \cdot \\
\cdot & \cdot & \cdot & \cdot & \cdot & \cdot & \cdot & 1 & \cdot \\
3z & \cdot & \cdot & \cdot & \cdot & \cdot & \cdot & \cdot & 1
\end{array}
\right),
\label{eq:W-wz}
\end{eqnarray}
where dots denote zero entries.

\section{Positivity region}
\label{sec:positivity}
Our first goal is to determine in the $(w,z)$-plane which maps $\Phi
(w,z)$ are positive. In the uniform slice the positivity problem can
be treated directly from the image of rank-one projectors.
\begin{lemma}[Direct scalar test in the uniform slice]
\label{lem:uniform-scalar-reduction}
The map $\Phi(w,z)$ is
positive if and only if
\begin{equation}
S(u_1,u_2):=\frac13-3w^2u_1u_2-3z^2u_1u_3\ge0,
\qquad
u_3:=1-u_1-u_2,
\label{eq:uniform-scalar-reduction}
\end{equation}
for all $u_1,u_2,u_3\ge0$ with $u_1+u_2+u_3=1$.
\end{lemma}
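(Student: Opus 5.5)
Positivity of $\Phi(w,z)$ is equivalent to $\Phi[\ket{x}\bra{x}]\succeq0$ for every unit vector $x=(x_1,x_2,x_3)\in\mathbb{C}^3$, because every positive semidefinite $X$ is a sum of rank-one projectors and $\Phi$ is linear. So I will compute the image of a pure state and reduce its positivity to a scalar condition.

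With $X=\ket{x}\bra{x}$ we have $X_{ij}=x_i\bar x_j$. Setting $u_i=|x_i|^2$, so that $u_1+u_2+u_3=1$, Eq.~\eqref{eq:map} with $a=b=d=1/3$ gives
\begin{equation}
\Phi[\ket{x}\bra{x}]=
\begin{pmatrix}
 \frac13 & w\,x_2\bar x_1 & z\,x_1\bar x_3\\
 w\,x_1\bar x_2 & \frac13 & 0\\
 z\,x_3\bar x_1 & 0 & \frac13
\end{pmatrix}.
\end{equation}
The diagonal equals $\tfrac13$ in every entry because the map is bistochastic. The matrix has an ``arrowhead'' form: a diagonal $\tfrac13 I$ plus off-diagonal entries only in the first row and column. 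For such a Hermitian matrix $\tfrac13 I+(\ket{1}\bra{v}+\ket{v}\bra{1})$, with $v=(0,\bar w\ldots)$ supported on $e_2,e_3$, the eigenvalues are $\tfrac13$ with multiplicity one (the direction in $\spanop\{e_2,e_3\}$ orthogonal to $v$) and $\tfrac13\pm\|v\|$. Here
\[
\|v\|^2=w^2|x_1|^2|x_2|^2+z^2|x_1|^2|x_3|^2=w^2u_1u_2+z^2u_1u_3.
\]
The same conclusion follows from Sylvester's criterion. Both the diagonal entries and the $2\times2$ minors on rows $\{1,2\}$, $\{1,3\}$, $\{2,3\}$ are nonnegative whenever the determinant condition holds, and the determinant equals
\[
\tfrac13\big(\tfrac19-w^2u_1u_2-z^2u_1u_3\big).
\]

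Therefore $\Phi[\ket{x}\bra{x}]\succeq0$ holds if and only if $\tfrac19-w^2u_1u_2-z^2u_1u_3\ge0$. Multiplying by $3$ gives exactly $S(u_1,u_2)\ge0$. Since the phases of the $x_i$ drop out, as $(u_1,u_2,u_3)$ ranges over the simplex it covers every unit vector $x$, up to phases. Positivity on all rank-one projectors is thus equivalent to $S\ge0$ on the whole simplex, which proves both directions of the statement.

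The only step needing care is the eigenvalue (or minor) analysis of the arrowhead matrix. For the PSD direction via minors one must check the principal minors and not only the determinant. This is immediate: the $1\times1$ minors equal $\tfrac13$, the $2\times2$ minors are $\tfrac19-w^2u_1u_2$, $\tfrac19-z^2u_1u_3$ and $\tfrac19$, and each is at least the quantity $\tfrac19-w^2u_1u_2-z^2u_1u_3$ appearing in the determinant. So there is no real obstacle, and the lemma reduces to this direct computation.
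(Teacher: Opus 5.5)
Your proof is correct and follows essentially the same route as the paper: reduce to rank-one projectors, compute the image $\Phi(w,z)(\ket{x}\!\bra{x})$, and extract the single scalar condition from positivity of this $3\times3$ matrix. The only difference is in the linear-algebra step. The paper keeps $y$ unnormalized and takes a Schur complement with respect to the diagonal lower block $\frac{s}{3}I_2$. You instead normalize and use the arrowhead eigenvalues $\frac13,\ \frac13\pm\|v\|$ (or Sylvester's criterion with all principal minors). Both lead to the same condition $\frac19-w^2u_1u_2-z^2u_1u_3\ge0$.
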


\begin{proof}
It is enough to test positivity on rank-one projectors.  Let
$y=(y_1,y_2,y_3)^T\neq0$, put
\begin{equation}
s=\|y_1\|^2+\|y_2\|^2+\|y_3\|^2,
\qquad
u_i=\frac{\|y_i\|^2}{s},
\end{equation}
so that $(u_1,u_2,u_3)$ belongs to the probability simplex.  From
\cref{eq:map}, with $a=b=d=1/3$, one obtains
\begin{equation}
\Phi(w,z)(\ket y\!\bra y)=
\begin{pmatrix}
 s/3 & w\,y_2\overline{y_1} & z\,y_1\overline{y_3}\\
 w\,y_1\overline{y_2} & s/3 & 0\\
 z\,\overline{y_1}y_3 & 0 & s/3
\end{pmatrix}.
\label{eq:uniform-rank-one-image}
\end{equation}
For $y\neq0$ all diagonal entries are strictly positive.  Taking the Schur
complement of the lower diagonal $2\times2$ block gives the single condition
\begin{equation}
\frac{s}{3}-\frac{3w^2\|y_1\|^2\|y_2\|^2}{s}
              -\frac{3z^2\|y_1\|^2\|y_3\|^2}{s}\ge0.
\end{equation}
Dividing by $s$ gives exactly \cref{eq:uniform-scalar-reduction}.  Conversely,
if \cref{eq:uniform-scalar-reduction} holds for every point of the simplex,
then \cref{eq:uniform-rank-one-image} is positive semidefinite for every
$y\neq0$, and therefore $\Phi(w,z)$ is positive.
\end{proof}

\begin{proposition}[Positivity conditions for $\Phi(w,z)$]
\label{prop:positivity-unbalanced}
Let $w,z\ge0$. Then $\Phi(w,z)$ is positive if and only if
\begin{equation}
0\le w\le\frac23,
\qquad
0\le z\le\frac23.
\label{eq:positivity-square}
\end{equation}
Thus the positivity region in the $(w,z)$-plane is the square
$[0,2/3]\times[0,2/3]$.
\end{proposition}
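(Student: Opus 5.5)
By Lemma~\ref{lem:uniform-scalar-reduction}, positivity of $\Phi(w,z)$ is equivalent to
\begin{equation*}
\max_{u\in\Delta}\,\bigl(3w^2u_1u_2+3z^2u_1u_3\bigr)\le\tfrac13 ,
\end{equation*}
where $\Delta$ is the probability simplex. The plan is therefore to compute this maximum in closed form. Write $u_2+u_3=1-u_1$ and observe that, for fixed $u_1$, the expression $u_1(w^2u_2+z^2u_3)$ is linear in $(u_2,u_3)$ on the segment $u_2+u_3=1-u_1$. Its maximum is attained at an endpoint, with value $u_1(1-u_1)\max(w^2,z^2)$. Maximizing $u_1(1-u_1)$ over $[0,1]$ then gives $1/4$ at $u_1=1/2$. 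Hence
\begin{equation*}
\max_{\Delta}\bigl(3w^2u_1u_2+3z^2u_1u_3\bigr)=\tfrac34\max(w^2,z^2).
\end{equation*}

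The condition $S\ge0$ on all of $\Delta$ becomes $\tfrac34\max(w^2,z^2)\le\tfrac13$, that is $\max(w^2,z^2)\le 4/9$. Since $w,z\ge0$, this is exactly $0\le w,z\le 2/3$, which is the claimed square.

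For the ``only if'' direction it is worth exhibiting the violating points explicitly. If $w>2/3$, take $u=(1/2,1/2,0)$: then $S=\tfrac13-\tfrac34w^2<0$. Symmetrically, if $z>2/3$, take $u=(1/2,0,1/2)$. Lifting back through the proof of the lemma, these correspond to the vectors $y=(1,1,0)$ and $y=(1,0,1)$, whose images under $\Phi(w,z)$ fail to be positive semidefinite.

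There is no real obstacle here. The only point needing care is the reduction of the maximization to the edges of the simplex, which follows from the linearity in $(u_2,u_3)$ at fixed $u_1$ (alternatively, from the bound $w^2u_2+z^2u_3\le\max(w^2,z^2)(1-u_1)$). One should also note that the boundary case $w=2/3$ or $z=2/3$ gives $S\ge0$ with equality at the maximizing point, so the closed square is indeed included.
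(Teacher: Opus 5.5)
Your proposal is correct and follows essentially the same route as the paper: reduce via Lemma~\ref{lem:uniform-scalar-reduction}, fix $u_1$, use linearity in $u_2$ to push the extremum to an edge of the simplex, and then use $u_1(1-u_1)\le 1/4$. Your explicit violating vectors $y=(1,1,0)^T$ and $y=(1,0,1)^T$ (each giving a $2\times2$ block with determinant $4/9-w^2$ or $4/9-z^2$) are a correct and useful concrete addition to the ``only if'' direction.
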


\begin{proof}
By \cref{lem:uniform-scalar-reduction}, positivity is equivalent to
non-negativity of $S(u_1,u_2)$ on the simplex.  Fix $u_1=t\in[0,1]$.  For $t=0$ the condition is automatic.
For $t>0$, the variable $u_2$ ranges in $[0,1-t]$, and
\begin{equation}
S(t,u_2)=\frac13-3t\left[w^2u_2+z^2(1-t-u_2)\right]
\end{equation}
is affine in $u_2$.  Hence its minimum on this interval is attained at one of
the endpoints.  If $w\ge z$, the minimum is at $u_2=1-t$; if $z\ge w$, it is at
$u_2=0$.  Therefore
\begin{equation}
\min_{0\le u_2\le1-t}S(t,u_2)
=\frac13-3t(1-t)\max\{w^2,z^2\}.
\label{eq:S-min}
\end{equation}
Since $t(1-t)\le1/4$, with equality at $t=1/2$, positivity on the whole simplex
is equivalent to
\begin{equation}
\frac13-\frac34\max\{w^2,z^2\}\ge0.
\end{equation}
This is equivalent to \cref{eq:positivity-square}.
\end{proof}

\begin{corollary}
\label{cor:cp-square}
The completely positive region for $\Phi(w,z)$ is
$[0,1/3]\times[0,1/3]$.
\end{corollary}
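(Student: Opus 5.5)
The plan is to read complete positivity off the Choi matrix using Choi's theorem. By that theorem, $\Phi(w,z)$ is completely positive if and only if $W(w,z)\succeq0$. So it suffices to compute the spectrum of the explicit matrix \cref{eq:W-wz}. One route is to specialize \cref{eq:cp-general} to $a=b=d=1/3$. I would instead verify the spectrum directly from the sparse structure, which makes the argument self-contained.

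In the ordered basis $e_1,\dots,e_9$, $W(w,z)$ decomposes as a direct sum. The first summand is the $2\times2$ block on $\spanop\{e_2,e_4\}$, equal to $\begin{pmatrix}1/3 & w\\ w & 1/3\end{pmatrix}$. The second is the analogous block on $\spanop\{e_1,e_9\}$ with $z$ in place of $w$. The remaining summand is $\tfrac13 I$ on the five vectors $e_3,e_5,e_6,e_7,e_8$. The eigenvalues are therefore $1/3$ with multiplicity five, together with $1/3\pm w$ and $1/3\pm z$. This agrees with \cref{eq:choi-spectrum} at $a=b=d=1/3$.

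Since $w,z\ge0$, positive semidefiniteness is equivalent to the two conditions $1/3-w\ge0$ and $1/3-z\ge0$. Equivalently, this is $w^2\le bd=1/9$ together with $a\ge z$ in the language of \cref{eq:cp-general}. The completely positive region is thus exactly $[0,1/3]\times[0,1/3]$.

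As a consistency remark, this square lies strictly inside the positivity square of \cref{prop:positivity-unbalanced}, as it must. Hence every $(w,z)\in[0,2/3]^2\setminus[0,1/3]^2$ gives a positive map that is not completely positive.

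There is no real obstacle here. The only point needing care is checking that the nonzero off-diagonal entries pair exactly $(e_2,e_4)$ and $(e_1,e_9)$, with no further couplings, so that the block decomposition is exact. This follows immediately from \cref{eq:W-wz}.
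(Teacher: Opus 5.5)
Your proposal is correct and follows the paper's route. You apply Choi's theorem and read complete positivity off the Choi spectrum at $a=b=d=1/3$, namely $1/3\pm w$, $1/3\pm z$ and $1/3$ with multiplicity five; the only difference is that you re-derive this spectrum from the block structure of \cref{eq:W-wz} instead of quoting \cref{eq:choi-spectrum}, which makes the argument self-contained but changes nothing of substance.
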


\begin{proof}
The Choi spectrum in \cref{eq:choi-spectrum}, specialized to
$a=b=d=1/3$, gives complete positivity if and only if
$z\le1/3$ and $w \le1/3$. 
\end{proof}

\begin{figure}[t]
 \centering
\includegraphics[width=0.46\textwidth]{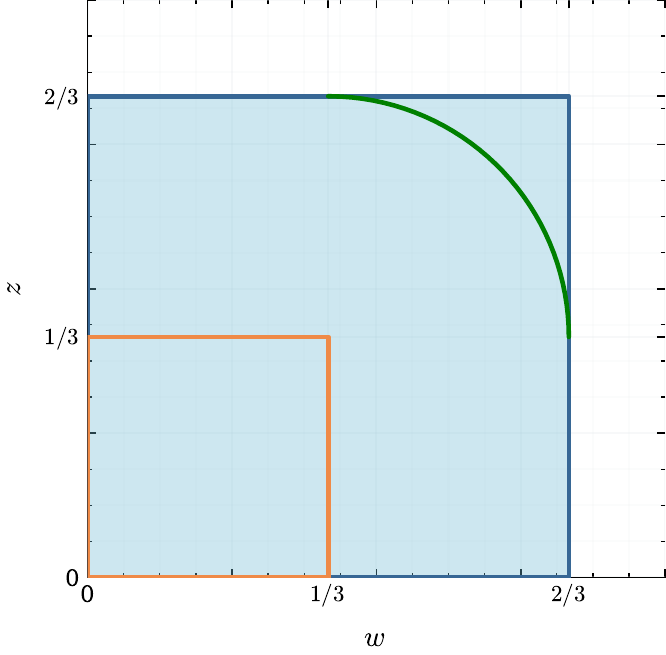}
 \caption{Phase diagram of the map $\Phi(w,z)$ in the
$(w,z)$-plane. The blue square is the positivity region
$[0,2/3]\times[0,2/3]$ from \cref{prop:positivity-unbalanced}, and the orange
square is the completely positive region $[0,1/3]\times[0,1/3]$ from
\cref{cor:cp-square}. In the upper-right corner, the green quarter circle
\cref{eq:quarter-circle} separates the decomposable sector from the
indecomposable cap proved in \cref{prop:decomposability-unbalanced}.}
 \label{fig:unbalanced-geometry}
\end{figure}

Thus, positivity is controlled only by the larger of the
two active coherence weights.  The smaller complete-positivity square is the
region where the two active Choi blocks are already positive semidefinite.
The remaining question is decomposability inside the positivity square.

\section{Decomposability boundary}
\label{sec:decomposability}

We now determine exactly which positive maps in \cref{eq:positivity-square} are
decomposable.  The answer fills the positivity square except for a circular cap
in the upper-right corner.  The proof is split into two parts.  First, explicit
PPT states detect all points beyond the circle, proving indecomposability.
Second, an explicit representation $W(w,z)=P+Q^\tau$ proves decomposability on and
below the circle.

\begin{proposition}[Decomposability conditions for $\Phi(w,z)$]
\label{prop:decomposability-unbalanced}
Let $w,z\geq 0$. Then $\Phi (w,z)$ is decomposable if and only if
\begin{equation}
\bigl(\max\{w-\tfrac13,0\}\bigr)^2+
\bigl(\max\{z-\tfrac13,0\}\bigr)^2\le\frac19.
\label{eq:decomp-region}
\end{equation}
Equivalently, every positive map with either $w\le1/3$ or $z\le1/3$ is
decomposable, while in the corner $w,z\ge1/3$ the boundary between
decomposable and indecomposable maps is
\begin{equation}
\left(w-\frac13\right)^2+
\left(z-\frac13\right)^2=\frac19.
\label{eq:quarter-circle}
\end{equation}
\end{proposition}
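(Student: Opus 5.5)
The plan is to prove sufficiency with an explicit decomposition $W(w,z)=P+Q^\tau$, and necessity with an explicit PPT state detected via \cref{thm:ppt-detects-indecomp}. Both constructions use the same sparse pattern. The first step is to compute the partial transpose on the second factor. Since $e_2e_4^*=|1\rangle\langle2|\otimes|2\rangle\langle1|\mapsto|1\rangle\langle2|\otimes|1\rangle\langle2|=e_1e_5^*$, and similarly $e_1e_9^*\mapsto e_3e_7^*$, we get
\begin{equation*}
W(w,z)^\tau=\tfrac13 I_9+w(e_1e_5^*+e_5e_1^*)+z(e_3e_7^*+e_7e_3^*).
\end{equation*}
So the $w$-coherence moves to the pair $(1,5)$ and the $z$-coherence to the pair $(3,7)$. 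In both $W$ and $W^\tau$, the index $e_1$ is the only one shared by the two active $2\times2$ blocks. This coupling is what should produce the circle.

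\emph{Sufficiency.} Split each coherence as $w=w_1+w_2$ and $z=z_1+z_2$, and the diagonal as $p_i+q_i=\tfrac13$. Put $w_1,z_1$ together with the $p_i$ into $P$ on the pairs $(2,4)$ and $(1,9)$. Put $w_2,z_2$ into $Q^\tau$ on the same pairs, so that $Q$ carries them on $(1,5)$ and $(3,7)$ with diagonal $q_i$. Positivity of $P$ and $Q$ reduces to four $2\times2$ conditions:
\begin{equation*}
p_2p_4\ge w_1^2,\qquad p_1p_9\ge z_1^2,\qquad q_1q_5\ge w_2^2,\qquad q_3q_7\ge z_2^2 .
\end{equation*}
Only $p_1+q_1=\tfrac13$ is shared, so I would put all other diagonal weight where it helps: $p_2=p_4=p_9=\tfrac13$ and $q_3=q_5=q_7=\tfrac13$. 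This permits $w_1=\min\{w,\tfrac13\}$ and $z_2=\min\{z,\tfrac13\}$. The residual coherences are then $w_2=\max\{w-\tfrac13,0\}$ and $z_1=\max\{z-\tfrac13,0\}$. The remaining conditions become $q_1/3\ge w_2^2$ and $p_1/3\ge z_1^2$ with $p_1+q_1=\tfrac13$. These are solvable exactly when $3(w_2^2+z_1^2)\le\tfrac13$, which is \cref{eq:decomp-region}.

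\emph{Necessity.} Take a real state $\rho$ with the same sparse support. Its nonzero entries are $\rho_{24}=\rho_{42}=-\alpha$ and $\rho_{19}=\rho_{91}=-\beta$ with $\alpha,\beta\ge0$, plus diagonal entries. Then
\begin{equation*}
\Tr[W\rho]=\tfrac13\Tr\rho-2w\alpha-2z\beta .
\end{equation*}
Moreover $\rho\succeq0$ iff $\rho_{22}\rho_{44}\ge\alpha^2$ and $\rho_{11}\rho_{99}\ge\beta^2$, while $\rho^\tau\succeq0$ iff $\rho_{11}\rho_{55}\ge\alpha^2$ and $\rho_{33}\rho_{77}\ge\beta^2$. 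Fix $\rho_{11}=c$ and saturate everything else as cheaply as possible: $\rho_{22}=\rho_{44}=\alpha$, $\rho_{33}=\rho_{77}=\beta$, $\rho_{55}=\alpha^2/c$, $\rho_{99}=\beta^2/c$. The trace is then $c+2\alpha+2\beta+(\alpha^2+\beta^2)/c$, which is minimized at $c=r:=\sqrt{\alpha^2+\beta^2}$. This gives
\begin{equation*}
\Tr[W\rho]=2\Bigl[\tfrac r3-(w-\tfrac13)\alpha-(z-\tfrac13)\beta\Bigr].
\end{equation*}
Choosing $(\alpha,\beta)=\bigl(\max\{w-\tfrac13,0\},\max\{z-\tfrac13,0\}\bigr)$, the bracket equals $r\bigl(\tfrac13-r\bigr)$, which is negative exactly outside \cref{eq:decomp-region}. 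After normalization, $\rho$ is a PPT state with $\Tr[W\rho]<0$. Inside the positivity square, \cref{thm:ppt-detects-indecomp} then gives indecomposability; outside the square, decomposability fails anyway, since the same PPT computation already excludes any $P+Q^\tau$ form. The equivalent reformulation in the statement is immediate from \cref{eq:decomp-region}.

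The main obstacle is not verification but finding the right ansatz. On both sides one must recognize that only the diagonal entry at $e_1$ is contested between the two coherence channels. Everything else can be saturated, which reduces the problem to a one-parameter optimization whose answer is the Euclidean norm of $(w-\tfrac13,z-\tfrac13)$. The matching of primal and dual constructions confirms that the circle is exact.
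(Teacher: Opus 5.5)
Your proposal is correct and is essentially the paper's proof: after normalization, your trace-minimized state at $c=r$ is exactly the paper's $\rho_\phi$ with $(\cos\phi,\sin\phi)=(\alpha,\beta)/r$. Your $P+Q^\tau$ uses the same split as the paper's: $\min\{w,\tfrac13\}$ goes into $P$ on $(e_2,e_4)$, $\min\{z,\tfrac13\}$ into $Q$ on $(e_3,e_7)$, and the two excesses compete for the $e_1$ diagonal entry, with differences only in where the slack diagonal weight is placed. One small advantage of your uniform choice $(\alpha,\beta)=\bigl((w-\tfrac13)_+,(z-\tfrac13)_+\bigr)$ is that it also explicitly certifies the exterior points with one coordinate below $\tfrac13$; the paper's indecomposability argument covers only the corner $w,z\ge\tfrac13$ and leaves those points implicit.
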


\begin{proof}
We first prove indecomposability outside the region
\cref{eq:decomp-region}.  By \cref{thm:ppt-detects-indecomp}, it is enough to
construct a PPT state with negative expectation value on $W(w,z)$.

Let $\phi\in[0,\pi/2]$, set $s=\sin\phi$ and $c=\cos\phi$, and define
\begin{equation}
  \rho_{\phi}=
  \frac{1}{2(1+s+c)}
\left(
\begin{array}{ccc|ccc|ccc}
1 & \cdot & \cdot & \cdot & \cdot & \cdot & \cdot & \cdot & -s \\
\cdot & c & \cdot & -c & \cdot & \cdot & \cdot & \cdot & \cdot \\
\cdot & \cdot & s & \cdot & \cdot & \cdot & \cdot & \cdot & \cdot \\
\hline
\cdot & -c & \cdot & c & \cdot & \cdot & \cdot & \cdot & \cdot \\
\cdot & \cdot & \cdot & \cdot & c^2 & \cdot & \cdot & \cdot & \cdot \\
\cdot & \cdot & \cdot & \cdot & \cdot & \cdot & \cdot & \cdot & \cdot \\
\hline
\cdot & \cdot & \cdot & \cdot & \cdot & \cdot & s & \cdot & \cdot \\
\cdot & \cdot & \cdot & \cdot & \cdot & \cdot & \cdot & \cdot & \cdot \\
-s & \cdot & \cdot & \cdot & \cdot & \cdot & \cdot & \cdot & s^2
\end{array}
\right).
\label{eq:rho-uv}
\end{equation}
The matrix $\rho_\phi$ has unit trace and is positive semidefinite because it is a direct sum of
positive rank-one $2\times2$ blocks and nonnegative scalar blocks.  The same is
true after partial transposition: the coherences supported on $(e_2,e_4)$ and
$(e_1,e_9)$ move to the pairs $(e_1,e_5)$ and $(e_3,e_7)$, respectively, again
producing only positive $2\times2$ rank-one blocks and nonnegative scalars.
Thus $\rho_\phi$ is PPT.

Using the explicit witness form \cref{eq:W-wz}, one obtains
\begin{equation}
\Tr\bigl[W(w,z)\rho_{\phi}\bigr]
=\frac{\frac13(1+s+c)-zs-wc}{1+s+c}.
\label{eq:witness-action}
\end{equation}
Assume now that $w,z\ge1/3$ and that \cref{eq:decomp-region} is violated.  Put
\begin{equation}
r:=\sqrt{\left(z-\frac13\right)^2+\left(w-\frac13\right)^2},
\qquad
s:=\frac{z-1/3}{r},
\qquad
c:=\frac{w-1/3}{r}.
\label{eq:uv-param}
\end{equation}
Then $s,c\ge0$, $s^2+c^2=1$, and $r>1/3$.  Substitution in
\cref{eq:witness-action} gives
\begin{equation}
\Tr\bigl[W(w,z)\rho_{\phi}\bigr]
=\frac{\frac13-r}{1+s+c}<0.
\end{equation}
Hence $\Phi(w,z)$ is indecomposable outside \cref{eq:decomp-region}.

It remains to prove decomposability inside \cref{eq:decomp-region}.  Let
\begin{equation}
t:=\frac13,
\qquad
\eta:=(z-t)_+,
\qquad
\xi:=\min\{w,t\},
\qquad
\gamma:=\min\{z,t\},
\qquad
\delta:=(w-t)_+,
\label{eq:etaxi-split}
\end{equation}
where $(x)_+=\max\{x,0\}$.  Then $z=\eta+\gamma$ and $w=\xi+\delta$, with all
four parameters between $0$ and $t$.  The condition
\cref{eq:decomp-region} is equivalent to
\begin{equation}
\delta^2\le t^2-\eta^2.
\label{eq:aux-bound}
\end{equation}
We now write
\begin{equation}
P=
\left(
\begin{array}{ccc|ccc|ccc}
\eta^2/t & \cdot & \cdot & \cdot & \cdot & \cdot & \cdot & \cdot & \eta \\
\cdot & \xi & \cdot & \xi & \cdot & \cdot & \cdot & \cdot & \cdot \\
\cdot & \cdot & t-\gamma & \cdot & \cdot & \cdot & \cdot & \cdot & \cdot \\
\hline
\cdot & \xi & \cdot & \xi & \cdot & \cdot & \cdot & \cdot & \cdot \\
\cdot & \cdot & \cdot & \cdot & \cdot & \cdot & \cdot & \cdot & \cdot \\
\cdot & \cdot & \cdot & \cdot & \cdot & t & \cdot & \cdot & \cdot \\
\hline
\cdot & \cdot & \cdot & \cdot & \cdot & \cdot & t-\gamma & \cdot & \cdot \\
\cdot & \cdot & \cdot & \cdot & \cdot & \cdot & \cdot & t & \cdot \\
\eta & \cdot & \cdot & \cdot & \cdot & \cdot & \cdot & \cdot & t
\end{array}
\right),
\label{eq:P-matrix}
\end{equation}
and
\begin{equation}
Q=
\left(
\begin{array}{ccc|ccc|ccc}
t-\eta^2/t & \cdot & \cdot & \cdot & \delta & \cdot & \cdot & \cdot & \cdot \\
\cdot & t-\xi & \cdot & \cdot & \cdot & \cdot & \cdot & \cdot & \cdot \\
\cdot & \cdot & \gamma & \cdot & \cdot & \cdot & \gamma & \cdot & \cdot \\
\hline
\cdot & \cdot & \cdot & t-\xi & \cdot & \cdot & \cdot & \cdot & \cdot \\
\delta & \cdot & \cdot & \cdot & t & \cdot & \cdot & \cdot & \cdot \\
\cdot & \cdot & \cdot & \cdot & \cdot & \cdot & \cdot & \cdot & \cdot \\
\hline
\cdot & \cdot & \gamma & \cdot & \cdot & \cdot & \gamma & \cdot & \cdot \\
\cdot & \cdot & \cdot & \cdot & \cdot & \cdot & \cdot & \cdot & \cdot \\
\cdot & \cdot & \cdot & \cdot & \cdot & \cdot & \cdot & \cdot & \cdot
\end{array}
\right).
\label{eq:Q-matrix}
\end{equation}
A direct comparison of the nonzero entries shows that indeed $W(w,z)=P+Q^\tau$.

The matrix $P$ is positive semidefinite because it is the direct sum of the
positive blocks
\begin{equation}
\begin{pmatrix}\eta^2/t & \eta\\ \eta & t\end{pmatrix},
\qquad
\xi\begin{pmatrix}1 & 1\\ 1 & 1\end{pmatrix},
\qquad
t-\gamma,
\qquad t,
\qquad t-\gamma,
\qquad t.
\end{equation}
The matrix $Q$ is positive semidefinite because it is the direct sum of
\begin{equation}
\begin{pmatrix}t-\eta^2/t & \delta\\ \delta & t\end{pmatrix},
\qquad
\gamma\begin{pmatrix}1 & 1\\ 1 & 1\end{pmatrix},
\qquad
t-\xi,
\qquad t-\xi,
\end{equation}
together with zero blocks.  The first $2\times2$ block is positive exactly by
\cref{eq:aux-bound}; all other displayed blocks are manifestly positive.
Therefore $W(w,z)$ is decomposable whenever \cref{eq:decomp-region} holds.
Combining this with the PPT-state detection above proves the proposition.
\end{proof}
\section{Witness-adapted PPT edge-state families}
\label{sec:pptstates}
The PPT entangled states $\rho_{\phi}$ for $\phi \in (0,\pi/2)$ of
\cref{eq:rho-uv} used in the proof of
\cref{prop:decomposability-unbalanced} are adapted to the two active
coherences of the sparse witness.  In this section we turn this
observation into an explicit family of PPT edge states detected by the
endpoint witness
\begin{equation}
W_*:=W\!\left(\frac23,\frac23\right)
=
\frac13 I_9+\frac23(e_2e_4^*+e_4e_2^*)
+
\frac23(e_1e_9^*+e_9e_1^*) .
\end{equation}
The construction is kernel-first: we prescribe compatible kernels for a state
and its partial transpose, then solve inside the corresponding PPT face.  The
edge property and detection region then remain explicit.

We use the following form of the range criterion~\cite{Horodecki1997}.
\begin{theorem}[Range criterion]
Let $\rho$ be a nonzero separable positive semidefinite operator on
$\mathbb C^m\otimes\mathbb C^n$. Then there exists a family of product
vectors $\{x_i\otimes y_i\}_i$ such that
\begin{eqnarray}
\Ran\rho=\spanop\{x_i\otimes y_i\}_i,
\qquad
\Ran\rho^\tau=\spanop\{x_i\otimes \overline{y_i}\}_i .
\end{eqnarray}
In particular, there exists a nonzero product vector
$x\otimes y\in\Ran\rho$ such that
$x\otimes\overline y\in\Ran\rho^\tau$.
\end{theorem}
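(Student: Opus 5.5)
The plan is to start from the definition of separability and use the linearity of partial transposition on product operators. Since $\rho$ is a nonzero separable positive semidefinite operator, it admits a decomposition
\begin{equation}
\rho=\sum_{i=1}^{N} p_i\,\ket{x_i}\!\bra{x_i}\otimes\ket{y_i}\!\bra{y_i},
\qquad p_i>0,
\end{equation}
with $N\ge1$ and nonzero $x_i\in\mathbb C^m$, $y_i\in\mathbb C^n$. The weights $p_i$ can be absorbed into the vectors, so we take $p_i=1$.

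For the first identity we use the standard fact that for positive semidefinite operators $A_i$ one has $\Ran\bigl(\sum_i A_i\bigr)=\sum_i\Ran A_i$. The inclusion $\subseteq$ is immediate. For the reverse inclusion we pass to kernels: if $\sum_i\langle v,A_iv\rangle=0$ with every term nonnegative, then $A_iv=0$ for each $i$. Hence $\ker\sum_iA_i=\bigcap_i\ker A_i$, and taking orthogonal complements gives the claim. Applied to $A_i=\ket{x_i\otimes y_i}\!\bra{x_i\otimes y_i}$, this yields $\Ran\rho=\spanop\{x_i\otimes y_i\}_i$.

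For the second identity we compute the partial transpose on the second factor in the fixed basis:
\begin{equation}
\bigl(\ket{y}\!\bra{y}\bigr)^T=\ket{\overline y}\!\bra{\overline y}.
\end{equation}
This follows entrywise, since $(y_k\overline{y_l})^T_{kl}=y_l\overline{y_k}=\overline{y}_k\,\overline{\overline y_l}$. Therefore
\begin{equation}
\rho^\tau=\sum_i\ket{x_i\otimes\overline{y_i}}\!\bra{x_i\otimes\overline{y_i}},
\end{equation}
which is again a sum of positive rank-one operators. The same range lemma then gives $\Ran\rho^\tau=\spanop\{x_i\otimes\overline{y_i}\}_i$.

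The final assertion follows by picking any single index $i$, which exists because $\rho\neq0$. Then $x=x_i$ and $y=y_i$ satisfy $x\otimes y\in\Ran\rho$ and $x\otimes\overline y\in\Ran\rho^\tau$, and $x\otimes y\ne0$ since both $x_i$ and $y_i$ are nonzero. The only step needing care is the range-of-sum lemma; the rest is bookkeeping about complex conjugation, and I would state the basis dependence of $\tau$ explicitly.
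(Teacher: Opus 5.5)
Your proof is correct. You decompose $\rho$ into pure product terms, use $\Ran\bigl(\sum_i A_i\bigr)=\sum_i\Ran A_i$ for positive semidefinite $A_i$, and note that $(\ket{y}\!\bra{y})^T=\ket{\overline y}\!\bra{\overline y}$; this is the standard argument of Horodecki (1997). The paper gives no proof of its own and only cites that reference, so your proposal supplies the reasoning the paper relies on.
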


Let us also recall the edge-state terminology.

\noindent\textbf{Definition (edge state).} A PPT entangled state $\rho$ is called
an \emph{edge state} if there is no nonzero product vector
$x\otimes y\in\Ran\rho$ such that
$x\otimes\overline y\in\Ran\rho^\tau$. Equivalently, one cannot subtract any
nonzero projector onto a product vector while preserving both positivity and the
PPT property~\cite{Lewenstein2000,HorodeckiLewensteinVidalCirac2000}.

\subsection{PPT edge states detected by the endpoint witness}
For parameters $\alpha,\beta,\mu,\nu>0$, define
\begin{equation}
A_{\alpha,\beta,\mu,\nu}
:=
(e_1-\alpha e_9)(e_1-\alpha e_9)^*
+\frac{\beta}{\mu}(\mu e_2-e_4)(\mu e_2-e_4)^*
+\frac{\alpha}{\nu}e_3e_3^*
+\beta^2e_5e_5^*
+\alpha\nu\,e_7e_7^*,
\end{equation}
and set
\begin{equation}
\rho_{\alpha,\beta,\mu,\nu}:=
\frac{A_{\alpha,\beta,\mu,\nu}}{N_{\alpha,\beta,\mu,\nu}},
\qquad
N_{\alpha,\beta,\mu,\nu}:=\Tr A_{\alpha,\beta,\mu,\nu}.
\end{equation}
Explicitly,
\begin{equation}
N_{\alpha,\beta,\mu,\nu}
=1+\alpha^2+
\beta\left(\mu+\frac1\mu\right)
+
\alpha\left(\nu+\frac1\nu\right)
+
\beta^2 .
\end{equation}
In the ordered basis fixed in Sec.~\ref{sec:family}, this state has the matrix
form
\begin{equation}
\rho_{\alpha,\beta,\mu,\nu}
=
\frac1{N_{\alpha,\beta,\mu,\nu}}
\left(
\begin{array}{ccc|ccc|ccc}
1 & \cdot & \cdot & \cdot & \cdot & \cdot & \cdot & \cdot & -\alpha\\
\cdot & \beta\mu & \cdot & -\beta & \cdot & \cdot & \cdot & \cdot & \cdot\\
\cdot & \cdot & \alpha/\nu & \cdot & \cdot & \cdot & \cdot & \cdot & \cdot\\
\hline
\cdot & -\beta & \cdot & \beta/\mu & \cdot & \cdot & \cdot & \cdot & \cdot\\
\cdot & \cdot & \cdot & \cdot & \beta^2 & \cdot & \cdot & \cdot & \cdot\\
\cdot & \cdot & \cdot & \cdot & \cdot & \cdot & \cdot & \cdot & \cdot\\
\hline
\cdot & \cdot & \cdot & \cdot & \cdot & \cdot & \alpha\nu & \cdot & \cdot\\
\cdot & \cdot & \cdot & \cdot & \cdot & \cdot & \cdot & \cdot & \cdot\\
-\alpha & \cdot & \cdot & \cdot & \cdot & \cdot & \cdot & \cdot & \alpha^2
\end{array}
\right).
\end{equation}
The states $\rho_\phi$ from \cref{eq:rho-uv} are recovered by taking
$\alpha=\sin\phi$, $\beta=\cos\phi$, and $\mu=\nu=1$.

\begin{proposition}
For every $\alpha,\beta,\mu,\nu>0$, the state
$\rho_{\alpha,\beta,\mu,\nu}$ is a PPT edge state of rank type $(5,5)$. Moreover,
\begin{equation}
\Tr(W_*\rho_{\alpha,\beta,\mu,\nu})
=
\frac{(\alpha-1)^2+(\beta-1)^2-1
+\alpha(\nu+\nu^{-1}-2)+\beta(\mu+\mu^{-1}-2)}
{3N_{\alpha,\beta,\mu,\nu}} .
\end{equation}
Hence $W_*$ detects $\rho_{\alpha,\beta,\mu,\nu}$ if and only if
\begin{equation}
(\alpha-1)^2+(\beta-1)^2
+\alpha(\nu+\nu^{-1}-2)+\beta(\mu+\mu^{-1}-2)<1 .
\label{condd}
\end{equation}
\end{proposition}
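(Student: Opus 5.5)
The plan is to handle the three claims in turn: positivity and rank of $\rho$ and $\rho^\tau$, the edge property via the range criterion, and the witness value. All three should follow from the sparse block structure, as in the proof of \cref{prop:decomposability-unbalanced}.

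\emph{PPT and rank type.} First I will check positivity and rank. $A_{\alpha,\beta,\mu,\nu}$ is a sum of positive multiples of projectors onto the five mutually orthogonal vectors $e_1-\alpha e_9$, $\mu e_2-e_4$, $e_3$, $e_5$, $e_7$. Hence $\rho\succeq0$, $\operatorname{rank}\rho=5$, and
\begin{equation}
\ker\rho=\spanop\{e_6,\ e_8,\ \alpha e_1+e_9,\ e_2+\mu e_4\}.
\end{equation}
For the partial transpose on the second factor, the coherence $e_1e_9^*=|11\rangle\langle33|$ becomes $|13\rangle\langle31|=e_3e_7^*$, and $e_2e_4^*=|12\rangle\langle21|$ becomes $e_1e_5^*$. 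The diagonal is unchanged. So $\rho^\tau$ is the direct sum of three pieces. The first is the block $\begin{pmatrix}1&-\beta\\-\beta&\beta^2\end{pmatrix}$ on $(e_1,e_5)$. The second is the block $\alpha\begin{pmatrix}1/\nu&-1\\-1&\nu\end{pmatrix}$ on $(e_3,e_7)$. The third consists of the positive scalars $\beta\mu$, $\beta/\mu$, $\alpha^2$ on $e_2,e_4,e_9$. Both $2\times2$ blocks are rank-one PSD, since their determinants vanish. Thus $\rho^\tau\succeq0$ with rank $5$, and
\begin{equation}
\ker\rho^\tau=\spanop\{e_6,\ e_8,\ \beta e_1+e_5,\ \nu e_3+e_7\}.
\end{equation}

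\emph{Edge property (main step).} Next I will show there is no nonzero product vector $x\otimes y\in\Ran\rho$ with $x\otimes\overline y\in\Ran\rho^\tau$. Writing membership as orthogonality to the kernels, this amounts to the equations
\begin{equation}
x_2y_3=0,\quad x_3y_2=0,\quad \alpha x_1y_1+x_3y_3=0,\quad x_1y_2+\mu x_2y_1=0,
\end{equation}
\begin{equation}
\beta x_1\overline{y_1}+x_2\overline{y_2}=0,\quad \nu x_1\overline{y_3}+x_3\overline{y_1}=0 .
\end{equation}
The hardest part is a careful case split showing these force $x\otimes y=0$.

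Suppose $x_1\neq0$ and $y_1\neq0$.
\begin{itemize}
\item The third equation gives $x_3y_3\neq0$.
\item Then the first equation gives $x_2=0$, and the second gives $y_2=0$.
\item But then the fifth equation gives $\beta x_1\overline{y_1}=0$, a contradiction.
\end{itemize}

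Suppose $x_1=0$. The fourth equation gives $x_2y_1=0$, the fifth gives $x_2y_2=0$, and the first gives $x_2y_3=0$. Hence $x_2=0$ or $y=0$. If $x_2=0$, then the third gives $x_3y_3=0$, the second gives $x_3y_2=0$, and the sixth gives $x_3y_1=0$. So $x=0$ or $y=0$.

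Suppose $y_1=0$. By the same pattern, using the fifth, fourth and first equations and then the third, second and sixth, everything vanishes. By the contrapositive of the range criterion, $\rho$ is therefore entangled, and hence a PPT edge state.

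\emph{Witness value and detection.} Finally I will compute the witness value. Since $W_*=\tfrac13 I_9+\tfrac23(e_2e_4^*+e_4e_2^*)+\tfrac23(e_1e_9^*+e_9e_1^*)$, we get
\begin{equation}
N\,\Tr(W_*\rho)=\tfrac13 N+\tfrac23(-2\beta)+\tfrac23(-2\alpha).
\end{equation}
Expanding $N-4\alpha-4\beta$ and regrouping as
\begin{equation}
(\alpha-1)^2+(\beta-1)^2-1+\alpha(\nu+\nu^{-1}-2)+\beta(\mu+\mu^{-1}-2)
\end{equation}
gives the stated formula. Because $N>0$, detection, i.e.\ $\Tr(W_*\rho)<0$, is equivalent to \eqref{condd}.
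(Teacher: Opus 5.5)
Your proposal is correct and follows the paper's proof essentially step by step. It uses the same kernels of $\rho_{\alpha,\beta,\mu,\nu}$ and $\rho_{\alpha,\beta,\mu,\nu}^\tau$, the same product-vector equations coming from the range criterion, and the same identity for $N_{\alpha,\beta,\mu,\nu}-4(\alpha+\beta)$ in the witness value. The only differences are cosmetic: you merge the duplicate equations $x_2\overline{y_3}=0$ and $x_3\overline{y_2}=0$, and you organize the case analysis around $x_1y_1\neq0$, $x_1=0$, or $y_1=0$, whereas the paper splits on $y_2$ and then $y_1$; both case splits are complete and valid.
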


\begin{proof}
Positivity is immediate from the displayed rank-one decomposition of
$A_{\alpha,\beta,\mu,\nu}$. Under partial transposition on the second subsystem,
\begin{equation}
e_2e_4^*+e_4e_2^* \longmapsto e_1e_5^*+e_5e_1^*,
\qquad
e_1e_9^*+e_9e_1^* \longmapsto e_3e_7^*+e_7e_3^* .
\end{equation}
Thus
\begin{align}
A_{\alpha,\beta,\mu,\nu}^\tau
&=(e_1-\beta e_5)(e_1-\beta e_5)^*+
\beta\mu\,e_2e_2^*
+\frac{\alpha}{\nu}(e_3-\nu e_7)(e_3-\nu e_7)^* \\
&\quad
+\frac{\beta}{\mu}e_4e_4^*
+\alpha^2 e_9e_9^*\succeq0 .
\end{align}
Therefore $\rho_{\alpha,\beta,\mu,\nu}$ is PPT. The two rank-one decompositions
also give
\begin{equation}
\ker\rho_{\alpha,\beta,\mu,\nu}
=
\operatorname{span}\{
\alpha e_1+e_9,
\ e_2+\mu e_4,
\ e_6,
\ e_8
\},
\end{equation}
and
\begin{equation}
\ker\rho_{\alpha,\beta,\mu,\nu}^\tau
=
\operatorname{span}\{
\beta e_1+e_5,
\ \nu e_3+e_7,
\ e_6,
\ e_8
\}.
\end{equation}
Consequently,
\begin{equation}
\operatorname{rank}\rho_{\alpha,\beta,\mu,\nu}
=
\operatorname{rank}\rho_{\alpha,\beta,\mu,\nu}^\tau=5.
\end{equation}

It remains to verify the edge property. Suppose that a product vector
$x\otimes y$ satisfies
\begin{equation}
x\otimes y\in\Ran\rho_{\alpha,\beta,\mu,\nu},
\qquad
x\otimes\overline y\in\Ran\rho_{\alpha,\beta,\mu,\nu}^\tau .
\end{equation}
Write $x=(x_1,x_2,x_3)^T$ and $y=(y_1,y_2,y_3)^T$. The first range condition is
equivalent to orthogonality to $\ker\rho_{\alpha,\beta,\mu,\nu}$, namely
\begin{equation}
x_2y_3=0,
\qquad
x_3y_2=0,
\qquad
x_1y_2+\mu x_2y_1=0,
\qquad
x_3y_3+\alpha x_1y_1=0.
\end{equation}
The second range condition gives
\begin{equation}
x_2\overline{y_3}=0,
\qquad
x_3\overline{y_2}=0,
\qquad
x_2\overline{y_2}+\beta x_1\overline{y_1}=0,
\qquad
x_3\overline{y_1}+\nu x_1\overline{y_3}=0.
\end{equation}
These equations have no nonzero product solution. Indeed, if $y_2\neq0$, then
$x_3=0$, whence $x_1y_1=0$ from
$x_3y_3+\alpha x_1y_1=0$. The equation
$x_2\overline{y_2}+\beta x_1\overline{y_1}=0$ then gives $x_2=0$, and finally
$x_1y_2+\mu x_2y_1=0$ gives $x_1=0$, a contradiction. Hence $y_2=0$. Then
$x_2\overline{y_2}+\beta x_1\overline{y_1}=0$ implies $x_1y_1=0$. If
$y_1\neq0$, then $x_1=0$; the equations
$x_1y_2+\mu x_2y_1=0$ and
$x_3\overline{y_1}+\nu x_1\overline{y_3}=0$ then give $x_2=x_3=0$, again a
contradiction. Thus $y_1=0$. With $y_1=y_2=0$, either $y_3\neq0$, in which case
$x_2=x_1=0$ and then $x_3=0$, or $y=0$. In all cases the product vector is zero.
Therefore no nonzero product vector satisfies the range criterion, and
$\rho_{\alpha,\beta,\mu,\nu}$ is an edge state.

Finally, the pairing with the endpoint witness is
\begin{equation}
\Tr(W_*\rho_{\alpha,\beta,\mu,\nu})
=
\frac13+ \frac43 (\rho_{\alpha,\beta,\mu,\nu})_{24}
+\frac43(\rho_{\alpha,\beta,\mu,\nu})_{19}
=
\frac13-\frac{4(\alpha+\beta)}{3N_{\alpha,\beta,\mu,\nu}} .
\end{equation}
Since
\begin{equation}
N_{\alpha,\beta,\mu,\nu}-4(\alpha+\beta)
=
(\alpha-1)^2+(\beta-1)^2-1
+\alpha(\nu+\nu^{-1}-2)+\beta(\mu+\mu^{-1}-2),
\end{equation}
the stated detection formula and condition follow.
\end{proof}
The detection condition \cref{condd} is an explicit analytic
description of the region of this four-parameter edge-state family
detected by the endpoint witness. The parameters $\alpha$ and $\beta$
control the two coherences paired with $W_*$, while $\mu$ and $\nu$
deform the range/kernel geometry while preserving both the PPT
property and the edge property. This is consistent with the general
range-based approach to $3\otimes3$ PPT edge states and Choi-type
witnesses~\cite{HaKye2005,HaKye2012,Clarisse2006}.

\begin{corollary}[Symmetric two-parameter slice]
Setting $\mu=\nu=1$ gives the two-parameter family
$\rho_{\alpha,\beta}:=\rho_{\alpha,\beta,1,1}$. For every
$\alpha,\beta>0$, this state is a PPT edge state of rank type $(5,5)$, and
\begin{equation}
\Tr(W_*\rho_{\alpha,\beta})
=
\frac{(\alpha-1)^2+(\beta-1)^2-1}{3N_{\alpha,\beta}},
\qquad
N_{\alpha,\beta}:=N_{\alpha,\beta,1,1}=(1+\alpha)^2+\beta(\beta+2).
\end{equation}
Thus $W_*$ detects $\rho_{\alpha,\beta}$ precisely on the disk
\begin{equation}
(\alpha-1)^2+(\beta-1)^2<1.
\end{equation}
Moreover, since
\begin{equation}
\mu+\mu^{-1}\ge2,
\qquad
\nu+\nu^{-1}\ge2,
\end{equation}
with equality if and only if $\mu=\nu=1$, the symmetric slice is the most
favorable one for detection at fixed $(\alpha,\beta)$.
\end{corollary}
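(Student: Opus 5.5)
This corollary is a direct specialization of the preceding proposition, so I will simply set $\mu=\nu=1$ in each of its conclusions. The edge property and the rank type $(5,5)$ were established there for every $\alpha,\beta,\mu,\nu>0$. In particular they hold on the slice $\mu=\nu=1$, and the kernels reduce to $\spanop\{\alpha e_1+e_9,\,e_2+e_4,\,e_6,\,e_8\}$ and $\spanop\{\beta e_1+e_5,\,e_3+e_7,\,e_6,\,e_8\}$. The product-vector elimination argument made no use of specific values of $\mu,\nu$ beyond their positivity, so nothing needs to be rechecked.

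Next I will evaluate the normalization. At $\mu=\nu=1$ we get
\begin{equation}
N_{\alpha,\beta,1,1}=1+\alpha^2+2\beta+2\alpha+\beta^2=(1+\alpha)^2+\beta(\beta+2),
\end{equation}
which is the stated $N_{\alpha,\beta}$. In the numerator of the proposition's formula for $\Tr(W_*\rho_{\alpha,\beta,\mu,\nu})$, both terms $\alpha(\nu+\nu^{-1}-2)$ and $\beta(\mu+\mu^{-1}-2)$ vanish. This leaves $\bigl[(\alpha-1)^2+(\beta-1)^2-1\bigr]/(3N_{\alpha,\beta})$. Since $N_{\alpha,\beta}>0$, the expectation value is negative exactly when $(\alpha-1)^2+(\beta-1)^2<1$, which gives the open disk.

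For the optimality remark, I will use AM--GM: $\mu+\mu^{-1}-2=(\sqrt\mu-1/\sqrt\mu)^2\ge0$, with equality iff $\mu=1$, and likewise for $\nu$. Because $\alpha,\beta>0$, the two extra terms in the left-hand side of \cref{condd} are nonnegative. They vanish only on the symmetric slice.

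One subtlety needs care. The sign of $\Tr(W_*\rho)$ is governed by the numerator $N-4(\alpha+\beta)$, and that numerator is minimized at $\mu=\nu=1$. The normalized value $\frac13-\frac{4(\alpha+\beta)}{3N}$ is also minimized there, because $N$ itself is minimized at $\mu=\nu=1$ for fixed $\alpha,\beta$. So the symmetric slice is most favorable both for the detection region and for the size of the violation. There is no real obstacle; the only point requiring attention is to state "most favorable" precisely in one of these two senses.
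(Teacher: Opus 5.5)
Your proposal is correct and takes the paper's route: substitute $\mu=\nu=1$ into the preceding proposition to obtain the edge property, the rank type $(5,5)$ and the trace formula, then use $\mu+\mu^{-1}-2=(\sqrt\mu-1/\sqrt\mu)^2\ge0$ (and likewise for $\nu$) to show the symmetric slice is optimal. Your added observation is also correct and reads ``most favorable'' a bit more sharply than the paper does: $N_{\alpha,\beta,\mu,\nu}$ is itself minimized at $\mu=\nu=1$, so the normalized violation $\frac13-\frac{4(\alpha+\beta)}{3N}$ is also largest there.
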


For fixed $(\mu,\nu)$, the detected region is a shifted disk in the
$(\alpha,\beta)$-plane, with maximal size at $\mu=\nu=1$; see
\cref{fig:detection-region-four-parameter}.  For fixed
$(\alpha,\beta)$, the detected set in the $(\mu,\nu)$-plane is a
plectrum-shaped neighborhood of $(1,1)$ whenever \cref{condd} is
satisfiable; see \cref{fig:detection-region-mu-nu}.

\begin{figure}[t]
 \centering
  \includegraphics[width=.6\textwidth]{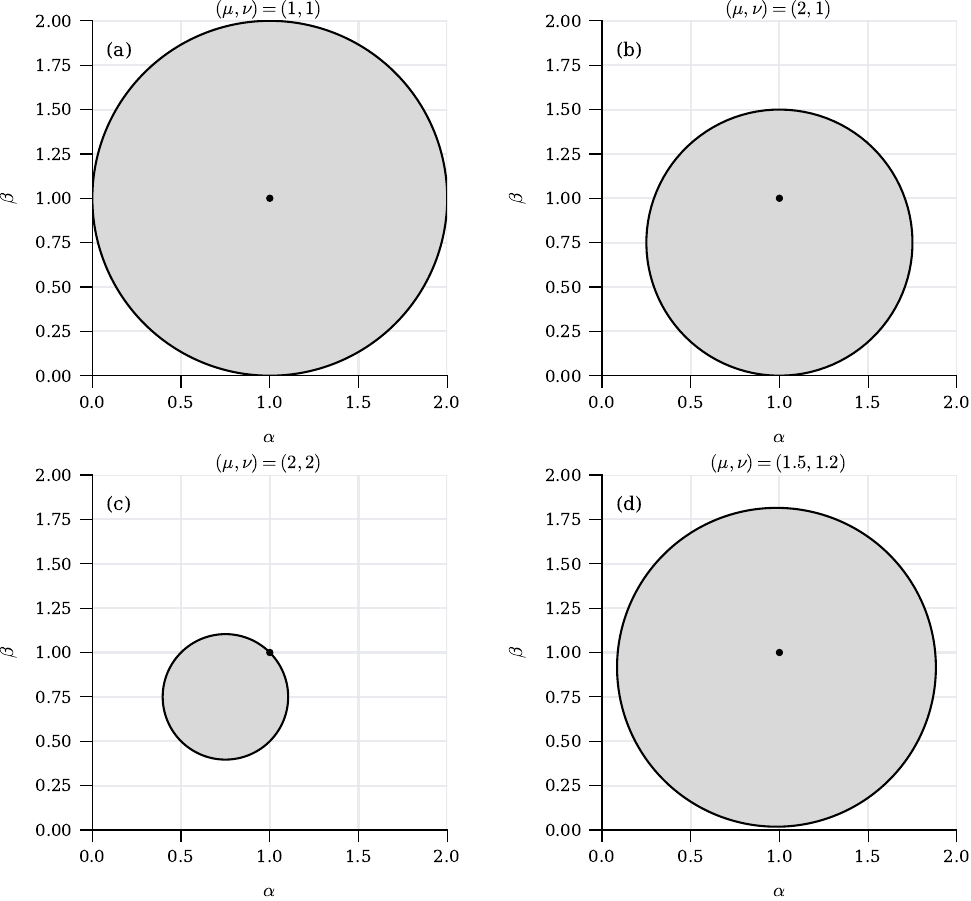}
 \caption{Detection region in the $(\alpha,\beta)$-plane for the
 four-parameter family $\rho_{\alpha,\beta,\mu,\nu}$ for several fixed values
 of $(\mu,\nu)$. The shaded region consists of those parameters for which the
 endpoint witness $W_*=W(\frac23,\frac23)$ detects the state, equivalently
 $(\alpha-1)^2+(\beta-1)^2+\alpha(\nu+\nu^{-1}-2)+\beta(\mu+\mu^{-1}-2)<1$.
 The black point marks $(\alpha,\beta)=(1,1)$. The top-left panel,
 corresponding to $(\mu,\nu)=(1,1)$, is the symmetric slice, where the
 detection region is the disk $(\alpha-1)^2+(\beta-1)^2<1$. As $\mu$ and/or
 $\nu$ move away from $1$, the region shrinks, in agreement with
 $\mu+\mu^{-1}\ge2$ and $\nu+\nu^{-1}\ge2$.}
 \label{fig:detection-region-four-parameter}
\end{figure}

\begin{figure}[t]
 \centering
 \includegraphics[width=.6\textwidth]{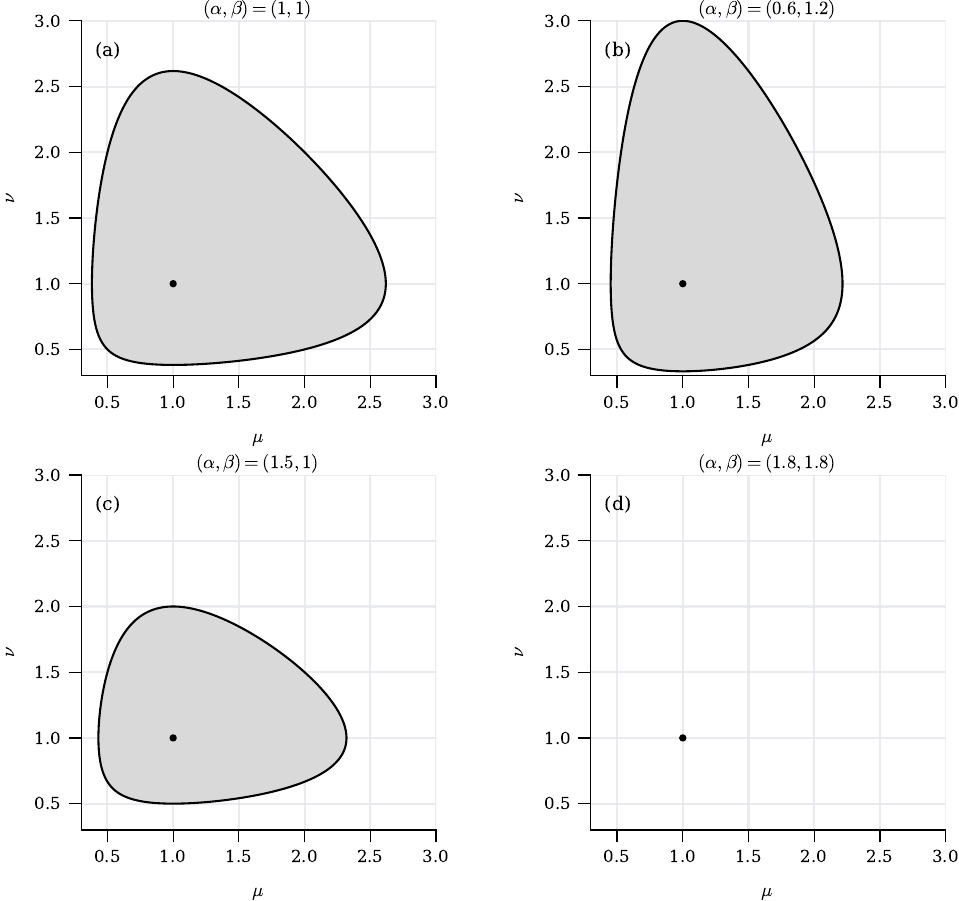}
 \caption{Detection region of the endpoint witness $W_*=W(\frac23,\frac23)$
 in the $(\mu,\nu)$-plane for the four-parameter family
 $\rho_{\alpha,\beta,\mu,\nu}$ at several fixed values of $(\alpha,\beta)$. The
 shaded region consists of those parameters for which
 $(\alpha-1)^2+(\beta-1)^2+\alpha(\nu+\nu^{-1}-2)+\beta(\mu+\mu^{-1}-2)<1$.
 The black point marks $(\mu,\nu)=(1,1)$. Since $\mu+\mu^{-1}\ge2$ and
 $\nu+\nu^{-1}\ge2$, with equality only at $\mu=\nu=1$, the point $(1,1)$ is
 always the most favorable one for detection at fixed $(\alpha,\beta)$. The
 bottom-right panel illustrates a parameter choice for which the detection region
 is empty.}
 \label{fig:detection-region-mu-nu}
\end{figure}

\subsection{Exposedness of the constructed PPT edge states}
We next show that these PPT edge states generate exposed rays of the PPT cone.

\noindent\textbf{Definition (exposed ray of the PPT cone).}
Let
\begin{equation}
T=\{X\in M_3\otimes M_3:\ X\succeq0,\ X^\tau\succeq0\}
\end{equation}
be the PPT cone of possibly unnormalized PPT operators. A ray $\mathbb R_+X_0\subset T$ is an
\emph{exposed ray} of $T$ if there exists a real linear functional $F$,
nonnegative on $T$, such that
\begin{equation}
T\cap\{X:F(X)=0\}=\mathbb R_+X_0.
\end{equation}
Equivalently, $\mathbb R_+X_0$ is a one-dimensional exposed face of $T$~\cite{Kye2011Faces,FacialStructures2012}.

We can then prove the following.

\begin{proposition}[Exposedness of the four-parameter family]
\label{cor:four-parameter-exposed}
For every $\alpha,\beta,\mu,\nu>0$, the state
$\rho_{\alpha,\beta,\mu,\nu}$ generates an exposed ray of the PPT cone $T$.
Equivalently, $\rho_{\alpha,\beta,\mu,\nu}$ is an exposed PPT edge state.
\end{proposition}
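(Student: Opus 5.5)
The plan is to use the standard exposing functional for PPT faces, built from the kernels. Let $K_1$ be the orthogonal projector onto $\ker\rho_{\alpha,\beta,\mu,\nu}$ and $K_2$ the orthogonal projector onto $\ker\rho^\tau_{\alpha,\beta,\mu,\nu}$, both computed in the previous proposition. Define
\begin{equation}
F(X):=\Tr(K_1X)+\Tr(K_2X^\tau)=\Tr\bigl[(K_1+K_2^\tau)X\bigr].
\end{equation}
On $T$ both terms are nonnegative, so $F\ge0$ on $T$, and $F(\rho_{\alpha,\beta,\mu,\nu})=0$.

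\textbf{Reduction to a linear system.} If $X\in T$ and $F(X)=0$, then $\Tr(K_1X)=0$ with $X\succeq0$ forces $\Ran X\subseteq\Ran\rho$. Similarly $\Ran X^\tau\subseteq\Ran\rho^\tau$. The face $\{X\in T:F(X)=0\}$ is therefore exactly the set of PPT operators satisfying these two range inclusions. It then suffices to show that the real linear space
\[
\mathcal V=\{X=X^\dagger:\ X\ker\rho=0,\ X^\tau\ker\rho^\tau=0\}
\]
is one-dimensional, spanned by $\rho$. Since $\rho$ is positive and lies in $\mathcal V$, the face is then the ray $\mathbb R_+\rho$. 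This is the usual criterion: an edge state whose "range pair" determines it uniquely is exposed.

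\textbf{Solving the system.} The constraints $X\ker\rho=0$ say that $X$ is a Hermitian operator supported on $\Ran\rho$. That range has dimension $5$, so $X$ ranges over a $25$-dimensional real space. I would parametrize $X=\sum_{i,j}c_{ij}f_if_j^*$ with the basis
\[
f_1=e_1-\alpha e_9,\quad f_2=\mu e_2-e_4,\quad f_3=e_3,\quad f_4=e_5,\quad f_5=e_7,
\]
which spans $\Ran\rho$ because these vectors are orthogonal to the kernel. Next I would compute $X^\tau$ entrywise; partial transposition permutes matrix units $|ij\rangle\langle kl|\mapsto|il\rangle\langle kj|$. I would then impose $X^\tau v=0$ for $v\in\{\beta e_1+e_5,\ \nu e_3+e_7,\ e_6,\ e_8\}$.

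The $e_6,e_8$ conditions require that $X^\tau$ have no entries in rows or columns $6$ and $8$. Those entries come from specific products such as $e_3e_5^*\mapsto$ (i.e. $|13\rangle\langle22|\mapsto|12\rangle\langle23|$), so these conditions kill many off-diagonal $c_{ij}$. I would go through the entries of $X$ in groups indexed by the pair of "coordinate sums": the sparse structure makes the conditions decouple into small blocks. I expect the remaining conditions to force all off-diagonal $c_{ij}$ to vanish except possibly $c_{11},\dots,c_{55}$ on the diagonal. I would then show that the two remaining kernel vectors impose the ratios
\[
c_{44}=\beta^2c_{11}\ (\text{from }\beta e_1+e_5),\qquad c_{33}:c_{55}:c_{11}=(\alpha/\nu):(\alpha\nu):1,\qquad c_{22}=(\beta/\mu)c_{11},
\]
the latter two coming from the coherences produced by $f_1f_1^*$ and $f_2f_2^*$ after transposition. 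The consequence is $X\propto A_{\alpha,\beta,\mu,\nu}$.

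\textbf{Main obstacle.} The hard step is the bookkeeping in the off-diagonal part: showing that no cross term $c_{ij}f_if_j^*$, $i\neq j$, survives both kernel conditions. I would organize this by noting that $W$-type sparsity respects a grading (for instance by $i-j \bmod 3$ of the basis label $|ij\rangle$, which is mapped to a controlled grading under $\tau$). Each cross term sits in a single graded component, so each can be checked against the four kernel vectors separately. If some cross term is not killed by the $e_6,e_8$ conditions alone, I would use the $\beta e_1+e_5$ and $\nu e_3+e_7$ equations, which mix only two basis components, to finish the argument. Having $\dim\mathcal V=1$ then gives exposedness for all $\alpha,\beta,\mu,\nu>0$.
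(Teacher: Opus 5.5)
Your proposal is correct and follows the paper's proof essentially step for step: the same exposing functional $\Tr\bigl[(P_{D^\perp}+P_{E^\perp}^\tau)X\bigr]$, the same basis $f_1,\dots,f_5$ of $\Ran\rho_{\alpha,\beta,\mu,\nu}$, the same four constraints $X^\tau v=0$, and the same conclusion $X\propto A_{\alpha,\beta,\mu,\nu}$ with exactly the diagonal ratios you predict. The off-diagonal bookkeeping you defer is a short direct computation in the paper. The $e_6,e_8$ conditions kill $h_{12},h_{14},h_{23},h_{34},h_{25},h_{45}$, and $\beta e_1+e_5$ then kills $h_{13},h_{15},h_{24},h_{35}$ and fixes $h_{22},h_{44}$. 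So no grading argument is needed, and the $i-j \bmod 3$ grading you suggest would not work anyway, since $f_2=\mu e_2-e_4$ is not homogeneous for it.
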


\begin{proof}
Let
\begin{equation}
D:=\Ran\rho_{\alpha,\beta,\mu,\nu},
\qquad
E:=\Ran\rho_{\alpha,\beta,\mu,\nu}^{\tau}.
\end{equation}
From the rank-one decompositions of
$A_{\alpha,\beta,\mu,\nu}$ and
$A_{\alpha,\beta,\mu,\nu}^{\tau}$ one has
\begin{equation}
D^\perp=
\operatorname{span}\{
\alpha e_1+e_9,\,
e_2+\mu e_4,\,
e_6,\,
e_8
\},
\end{equation}
and
\begin{equation}
E^\perp=
\operatorname{span}\{
\beta e_1+e_5,\,
\nu e_3+e_7,\,
e_6,\,
e_8
\}.
\end{equation}
Consider the face of the PPT cone determined by the range pair $(D,E)$,
\begin{equation}
\mathcal F(D,E)
=
\{
X\in T:\ \Ran X\subset D,\ \Ran X^\tau\subset E
\}.
\end{equation}
Clearly
$\rho_{\alpha,\beta,\mu,\nu}\in\mathcal F(D,E)$. We now show that this face is
exactly the ray generated by this state.

Since $X=X^\dagger$ and $\Ran X\subset D$, equivalently
$D^\perp\subset\ker X$, every Hermitian matrix satisfying the first range
constraint can be written as
\begin{equation}
X=\sum_{i,j=1}^5 h_{ij} f_i f_j^*,
\qquad
H=(h_{ij})=H^\dagger,
\end{equation}
where
\begin{equation}
f_1=e_1-\alpha e_9,\quad
f_2=\mu e_2-e_4,\quad
f_3=e_3,\quad
f_4=e_5,\quad
f_5=e_7
\end{equation}
is a basis of $D$.

We impose the second range constraint,
$\Ran X^\tau\subset E$, equivalently
$E^\perp\subset\ker X^\tau$. Thus
\begin{equation}
X^\tau(\beta e_1+e_5)=0,\qquad
X^\tau(\nu e_3+e_7)=0,\qquad
X^\tau e_6=0,\qquad
X^\tau e_8=0.
\label{eq:four-kernel-constraints}
\end{equation}
The following calculation is purely linear-algebraic: the two range constraints give a homogeneous linear system for the coefficients $h_{ij}$.
We solve these equations using the rule
\begin{eqnarray}
(\ket{i,j}\bra{k,l})^\tau
=
\ket{i,l}\bra{k,j}.
\end{eqnarray}
From direct computation one obtains
\begin{equation}
X^\tau e_6
=
-\overline{h_{23}}\,e_1
+
h_{34}\,e_2
+
\alpha h_{12}\,e_7
-
\alpha h_{14}\,e_8.
\end{equation}
Hence
\begin{equation}
h_{12}=h_{14}=h_{23}=h_{34}=0.
\label{eq:first-offdiag}
\end{equation}
Similarly,
\begin{equation}
X^\tau e_8
=
\mu h_{25}\,e_1
-
\alpha\mu\,\overline{h_{12}}\,e_3
+
h_{45}\,e_4
-
\alpha\overline{h_{14}}\,e_6.
\end{equation}
Using \eqref{eq:first-offdiag}, this gives
\begin{equation}
h_{25}=h_{45}=0.
\label{eq:second-offdiag}
\end{equation}

Next impose $X^\tau(\beta e_1+e_5)=0$. After using
\eqref{eq:first-offdiag} and \eqref{eq:second-offdiag}, the only remaining
terms are
\begin{align}
X^\tau(\beta e_1+e_5)
&=
(\beta h_{11}-\mu h_{22})e_1
+\mu h_{24}e_2
+\beta h_{13}e_3
-\overline{h_{24}}e_4
\nonumber\\
&\quad
+(-\beta\mu h_{22}+h_{44})e_5
+\beta\overline{h_{15}}e_7
+\beta\overline{h_{35}}e_9 .
\end{align}
Therefore
\begin{equation}
h_{13}=h_{15}=h_{24}=h_{35}=0,
\end{equation}
and
\begin{equation}
h_{22}=\frac{\beta}{\mu}h_{11},
\qquad
h_{44}=\beta^2 h_{11}.
\label{eq:diag-beta}
\end{equation}

Finally, impose $X^\tau(\nu e_3+e_7)=0$. With all off-diagonal coefficients
already eliminated, this reduces to
\begin{equation}
X^\tau(\nu e_3+e_7)
=
(-\alpha h_{11}+\nu h_{33})e_3
+
(-\alpha\nu h_{11}+h_{55})e_7.
\end{equation}
Thus
\begin{equation}
h_{33}=\frac{\alpha}{\nu}h_{11},
\qquad
h_{55}=\alpha\nu h_{11}.
\label{eq:diag-alpha}
\end{equation}

Equations \eqref{eq:first-offdiag}--\eqref{eq:diag-alpha}, together
with Hermiticity of $H$, show that all off-diagonal coefficients
vanish and that all diagonal coefficients are fixed by the single real
parameter $h_{11}$. Consequently every Hermitian matrix satisfying
both range constraints is of the form
\begin{align}
X
&=
h_{11}
\left(
f_1f_1^*
+\frac{\beta}{\mu}f_2f_2^*
+\frac{\alpha}{\nu}f_3f_3^*
+\beta^2 f_4f_4^*
+\alpha\nu f_5f_5^*
\right)
\nonumber\\
&=
h_{11}A_{\alpha,\beta,\mu,\nu}.
\end{align}

Thus the real vector space of Hermitian matrices satisfying both range
constraints is one-dimensional, and its positive semidefinite part is precisely
$\mathbb R_+\rho_{\alpha,\beta,\mu,\nu}$. Hence
\begin{equation}
\mathcal F(D,E)=\mathbb R_+\rho_{\alpha,\beta,\mu,\nu}.
\label{tde}
\end{equation}
Finally, let $P_{D^\perp}$ and $P_{E^\perp}$ be the orthogonal projections onto
$D^\perp$ and $E^\perp$. Define
\begin{equation}
F(X)=\Tr\!\left(\bigl[P_{D^\perp}+(P_{E^\perp})^\tau\bigr]X\right).
\end{equation}
For every $X\in T$,
\begin{equation}
F(X)=\Tr(P_{D^\perp}X)+\Tr(P_{E^\perp}X^\tau)\ge0.
\end{equation}
Moreover, $F(X)=0$ if and only if
\begin{equation}
\Ran X\subset D,
\qquad
\Ran X^\tau\subset E.
\end{equation}
Hence the zero set of $F$ in $T$ is precisely $\mathcal F(D,E)$.
Since we have shown (\ref{tde}), the ray generated by $\rho_{\alpha,\beta,\mu,\nu}$ is exposed.
\end{proof}

The above construction is consistent with the standard duality between
positive maps and bipartite states: indecomposable witnesses detect
PPT entangled states, while the associated range geometry organizes
the relevant faces of the PPT
cone~\cite{HaKyePark2003,HaKye2004,HaKyeExposed2011}. The endpoint
witness $W_*=W(\frac23,\frac23)$ also plays a distinguished role on
the witness side. Its dual face in the separable cone, its
nonoptimality, and an optimal refinement are analyzed in the next
section.
\section{The endpoint witness $W_*$: nonoptimality, optimal refinement, dual face, and exposedness}
\label{sec:wstar}
In this section we collect the witness-side properties completing the geometric picture.
This witness-side question should not be confused with the exposedness of the PPT-state rays proved above: the former concerns the cone of positive maps, whereas the latter concerns the PPT cone.
In this section we collect the witness-side properties completing the geometric picture. We prove that $W_*$ is not optimal, exhibit
a natural witness refinement, show that this refinement is optimal and
enlarges the detection region on the four-parameter PPT-edge family of
Sec.~\ref{sec:pptstates}. Then, we determine the dual face of $W_*$ in the
separable cone, and explain why the corresponding positive map does
not generate an exposed ray.  These questions belong to the broader facial-structure analysis of the separable and PPT cones; see, in particular,
Refs.~\cite{Lewenstein2000,AugusiakBaeCzekajLewenstein2011,HaKyeSPA2012,ChruscinskiSarbicki2014,Kye2011Faces,FacialStructures2012}.

\begin{proposition}
\label{prop:Wstar-not-optimal}
The endpoint witness $W_*:=W\!\left(\frac23,\frac23\right)$
 is not optimal.
\end{proposition}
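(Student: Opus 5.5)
The plan is to use the standard optimality criterion of Lewenstein--Kraus--Cirac--Horodecki~\cite{Lewenstein2000}. A witness $W$ is optimal iff there is no nonzero $P\succeq0$ such that $W-P$ is still block-positive. A convenient sufficient condition for nonoptimality is that the zero set
\[
\mathcal P_W=\{x\otimes y:\ \bra{x}\otimes\bra{y}\,W\,\ket{x}\otimes\ket{y}=0\}
\]
does not span $\mathbb C^3\otimes\mathbb C^3$. In that case one subtracts $\epsilon P$, where $P$ is a projector onto a vector orthogonal to $\mathcal P_W$, and a compactness argument keeps $W-\epsilon P$ block-positive for small $\epsilon>0$. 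I will establish this non-spanning property for $W_*$, and then, as a safety net, exhibit an explicit $P$ and $\epsilon$.

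First, compute the product expectation. By \cref{eq:W-wz}, with $w=z=2/3$ and $\|x\|=\|y\|=1$,
\[
\langle x\otimes y|W_*|x\otimes y\rangle=\tfrac13+\tfrac43\,\mathrm{Re}\bigl(\overline{x_1y_2}\,x_2y_1\bigr)+\tfrac43\,\mathrm{Re}\bigl(\overline{x_1y_1}\,x_3y_3\bigr).
\]
Set $a_i=|x_i|$ and $b_i=|y_i|$. This is bounded below by $\tfrac13-\tfrac43 a_1b_1(a_2b_2+a_3b_3)$. Cauchy--Schwarz gives $a_2b_2+a_3b_3\le\sqrt{1-a_1^2}\sqrt{1-b_1^2}$, and then $a_1\sqrt{1-a_1^2}\le\tfrac12$ bounds the right side by $0$, consistent with \cref{prop:positivity-unbalanced}.

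Second, read off the zero set from the equality cases. A zero forces $a_1=b_1=1/\sqrt2$, $(a_2,a_3)=(b_2,b_3)$, and both real parts to be maximally negative. The last condition fixes the phases: $\arg(x_2y_1)-\arg(x_1y_2)=\pi$ and $\arg(x_3y_3)-\arg(x_1y_1)=\pi$ whenever the moduli are nonzero. Using these constraints I will compute the amplitudes $x_iy_j$ of a generic zero and look for a nonzero vector $v$ with $\langle v|x\otimes y\rangle=0$ for all zeros. The natural candidates are combinations of the basis vectors carrying no coherence in $W_*$, namely $e_3,e_5,e_6,e_7,e_8$.

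The zero-set phase condition for the pair $(e_1,e_9)$ reads $x_1y_1\,\overline{\phantom{x}}\!\!\sim -x_3y_3$ up to positive factors. On zeros, the moduli give $|x_1y_1|=a_1^2=1/2$ and $|x_3y_3|=a_3^2$. This suggests that linear relations between amplitudes such as $x_1y_1$ and $x_3y_3$ hold only in modulus, not linearly. I would therefore test orthogonality numerically on a parametrization of zeros and then prove it. A more promising candidate is a vector like $e_6$ combined with others, exploiting the relation $(a_2,a_3)=(b_2,b_3)$ together with the phase constraints, which make $x_2y_3$ and $x_3y_2$ related by a fixed phase.

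If the spanning test is inconclusive, the fallback is to verify directly that
\[
W_*-\epsilon\,vv^*
\]
is block-positive for a concrete $v$ (for instance $v=e_6-e_8$ or $e_5$) and small $\epsilon$. This reduces, exactly as in \cref{lem:uniform-scalar-reduction}, to a Schur-complement scalar inequality on the rank-one image $\Phi_*(\ket y\bra y)$ minus the corresponding correction. The main obstacle is this step: identifying the right $v$ from the zero-set geometry and controlling the inequality near the zeros, where the quadratic margin of $W_*$ must dominate the $\epsilon|\langle v|x\otimes y\rangle|^2$ term. I would handle it by a second-order expansion around the zero manifold, showing that $\langle v|x\otimes y\rangle$ vanishes there to first order.
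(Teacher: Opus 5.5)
Your equality-case analysis of the zeros is correct, but the argument built on it has two genuine gaps. First, the ``sufficient condition for nonoptimality'' you start from is false. If $v\perp\mathcal P_{W}$, then $\langle x\otimes y|W|x\otimes y\rangle$ and $|\langle v|x\otimes y\rangle|^2$ both vanish on the entire zero manifold. Compactness therefore gives no positive lower bound on their ratio, and the infimum of that ratio can be $0$. The spanning criterion is only sufficient for optimality, and its converse fails: the Choi map is optimal, indeed extremal, although its zero product vectors span only a $7$-dimensional subspace. The obstacle you name at the end, controlling the inequality near the zeros, is exactly why the compactness step does not work. So the proof rests entirely on producing an explicit $v$ and $\epsilon>0$ with $W_*-\epsilon vv^*$ block-positive, and the proposal leaves this as a plan.

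Second, the candidates you name for that step cannot work. Solving your own equality conditions, the zeros of $W_*$ are, up to a global phase, $x\otimes y$ with $x=\tfrac1{\sqrt2}(1,\overline{\beta},\overline{\gamma})^T$, $y=\tfrac1{\sqrt2}(1,-\overline{\beta},-\gamma)^T$ and $|\beta|^2+|\gamma|^2=1$ (cf.\ \cref{prop:dual-face-Wstar-appendix}). On these zeros:
\begin{itemize}
\item the $e_5$ amplitude is $-\overline{\beta}^{\,2}/2$;
\item $\langle e_6-e_8|x\otimes y\rangle=\overline{\beta}(\overline{\gamma}-\gamma)/2$;
\item $x_2y_3/(x_3y_2)=\gamma/\overline{\gamma}$, which is not a fixed phase.
\end{itemize}
At any zero $\zeta$ with $\langle v|\zeta\rangle\neq0$ one has $\langle\zeta|(W_*-\epsilon vv^*)|\zeta\rangle=-\epsilon|\langle v|\zeta\rangle|^2<0$ for every $\epsilon>0$.

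In fact, up to constants the nine amplitudes are $1,\overline{\beta},\gamma,\overline{\beta},\overline{\beta}^{\,2},\overline{\beta}\gamma,\overline{\gamma},\overline{\beta}\,\overline{\gamma},|\gamma|^2$, with the $e_2$ and $e_4$ entries equal to $\mp\overline{\beta}/2$. The eight distinct functions are linearly independent on the sphere, since their torus weights differ except for $1$ and $|\gamma|^2$. Hence $\mathcal P_{W_*}^{\perp}=\mathbb C(e_2+e_4)$. The only admissible direction lies in the $w$-coherence sector, not among the incoherent vectors $e_3,e_5,e_6,e_7,e_8$.

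The missing step is precisely the paper's proof. Subtract $\tfrac13(e_2+e_4)(e_2+e_4)^*=\tfrac23C_\Psi$ with $\Psi(X)=V^\dagger XV$. Then check directly that $(\Phi_*-\tfrac23\Psi)(\ket{y}\!\bra{y})$ is positive semidefinite for $y=(a,b,c)^T$ from its principal minors; the determinant is $|c|^2(|a|^2-|b|^2-|c|^2)^2/27\ge0$. This gives the finite value $\epsilon=1/3$ and makes any local analysis near the zero manifold unnecessary.
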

\begin{proof}
Let $\Phi_*=\Phi_{2/3,2/3}$ be the corresponding positive map, and
define $\Psi(X)=V^\dagger X V$, where
\begin{equation}
V = \frac{1}{\sqrt{2}} \begin{pmatrix} 0 & 1 & 0 \\ 1 & 0 & 0 \\ 0 & 0
  & 0 \end{pmatrix}\,.
\end{equation}
Then $\Psi$ is a nonzero completely positive map. Consider $\widetilde\Phi:=\Phi_* - \frac23\,\Psi$.
We claim that $\widetilde\Phi$ is positive. It is enough to check positivity on
rank-one projectors $\ket{y}\!\bra{y}$, with $y=(a,b,c)^T$. Writing
\begin{equation}
u=|a|^2,\qquad v=|b|^2,\qquad q=|c|^2,\qquad s=u+v+q,
\label{sett}
\end{equation}
one finds
\begin{eqnarray}
\widetilde\Phi(\ket{y}\!\bra{y})
=
\frac 13 \begin{pmatrix}
u+q & \overline a\,b & 2a\,\overline c\\[1mm]
a\,\overline b & v+q & 0\\[1mm]
2\overline a\,c & 0 & s
\end{pmatrix}
\;.\label{reca}
\end{eqnarray}
Its diagonal entries are nonnegative. The nontrivial $2\times2$ principal minors
are
\begin{align}
\Delta_{12}&=\frac{qs}{9}\ge0,\\
\Delta_{13}&=\frac{(u-q)^2+v(u+q)}{9}\ge0,\\
\Delta_{23}&=\frac{s(v+q)}{9}\ge0,
\end{align}
and its determinant is
\begin{equation}
\frac{q\,(u-v-q)^2}{27}\ge0.
\end{equation}
Hence all principal minors are nonnegative, so
\begin{equation}
\widetilde\Phi(\ket{y}\!\bra{y})\succeq0
\qquad \forall\,y\in\mathbb C^3,
\end{equation}
and $\widetilde\Phi$ is positive. Therefore $\Phi_*$ admits subtraction of a
nonzero completely positive map while remaining positive, so $\Phi_*$ is not
optimal. Equivalently, $W_*$ is not optimal~\cite{Lewenstein2000,ChruscinskiSarbicki2014}.
\end{proof}

The proof singles out the refined witness
\begin{equation}
\widetilde W:=W_* - \frac23\,C_\Psi,
\label{eq:Wtilde-def}
\end{equation}
where $C_\Psi$ is the Choi matrix of $\Psi$.  We record its explicit form,
its action on $\rho_{\alpha,\beta,\mu,\nu}$, and its optimality.

\begin{proposition}[A refined optimal witness]
\label{prop:Wtilde-optimal}
Let $\widetilde W$ be defined by \cref{eq:Wtilde-def}. Then:
\begin{enumerate}
\item $\widetilde W$ is an optimal entanglement witness;
\item the Choi matrix of $\Psi$ is
\begin{equation}
C_\Psi=\frac12\,(e_2+e_4)(e_2+e_4)^*;
\label{eq:Cpsi-explicit}
\end{equation}
\item for every $\alpha,\beta,\mu,\nu>0$,
\begin{equation}
\Tr(\widetilde W\,\rho_{\alpha,\beta,\mu,\nu})
=
\frac{(\alpha-1)^2+(\beta-1)^2-1
+\alpha(\nu+\nu^{-1}-2)}{3N_{\alpha,\beta,\mu,\nu}};
\label{eq:Wtilde-action}
\end{equation}
\item if $\mathcal D(W)$ denotes the set of parameters
$(\alpha,\beta,\mu,\nu)$ for which $W$ detects the state
$\rho_{\alpha,\beta,\mu,\nu}$, then
\begin{equation}
\mathcal D(W_*)\subsetneq\mathcal D(\widetilde W).
\label{eq:strict-inclusion-detect}
\end{equation}
\end{enumerate}
\end{proposition}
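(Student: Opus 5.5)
The plan is to dispatch items 2--4 by direct computation and then spend the real effort on item 1, optimality.

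For item 2, I will compute the Choi matrix of $\Psi(X)=V^\dagger XV$ from \cref{eq:ChoiM}. Here $V$ swaps the first two basis vectors and kills the third, scaled by $1/\sqrt2$. Writing $C_\Psi=\sum_{n,m}\ket n\bra m\otimes V^\dagger\ket n\bra m V$, only $n,m\in\{1,2\}$ survive: $V^\dagger\ket1=\ket2/\sqrt2$ and $V^\dagger\ket2=\ket1/\sqrt2$. Hence $C_\Psi=\tfrac12(\ket{12}+\ket{21})(\bra{12}+\bra{21})=\tfrac12(e_2+e_4)(e_2+e_4)^*$.

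For item 3, I use $\Tr(\widetilde W\rho)=\Tr(W_*\rho)-\tfrac23\Tr(C_\Psi\rho)$. The previous Proposition supplies $\Tr(W_*\rho)$. The matrix form of $\rho_{\alpha,\beta,\mu,\nu}$ gives
\[
\Tr(C_\Psi\rho)=\tfrac12(\rho_{22}+\rho_{44}+2\,\mathrm{Re}\,\rho_{24})=\frac{\beta(\mu+\mu^{-1}-2)}{2N_{\alpha,\beta,\mu,\nu}} .
\]
Multiplying by $\tfrac23$ removes exactly the $\beta(\mu+\mu^{-1}-2)$ term from the numerator, which gives \cref{eq:Wtilde-action}.

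For item 4, the numerator for $\widetilde W$ is the numerator for $W_*$ minus $\beta(\mu+\mu^{-1}-2)\ge0$. So every detected parameter point for $W_*$ is detected by $\widetilde W$. For strictness I take $\alpha=\beta=\nu=1$ and $\mu$ large, for example $\mu=4$. Then the $W_*$ numerator is $\mu+\mu^{-1}-3>0$, so $W_*$ does not detect the state, while the $\widetilde W$ numerator is $-1<0$.

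For item 1, I note first that $\widetilde W$ is a witness: it is block-positive by the proof of \cref{prop:Wstar-not-optimal}, and it is not positive semidefinite since it detects PPT entangled states by item 3. To prove optimality I will use the Lewenstein--Kraus--Cirac--Horodecki spanning criterion~\cite{Lewenstein2000}: if the product vectors $x\otimes y$ with $\bra{x\otimes y}\widetilde W\ket{x\otimes y}=0$ span $\mathbb C^9$, then $\widetilde W$ is optimal. With the Choi convention, $\bra{x\otimes y}C_\Phi\ket{x\otimes y}=\bra y\Phi(\ket{\bar x}\bra{\bar x})\ket y$. So zeros correspond to pairs $(\bar x,y)$ with $y\in\ker\widetilde\Phi(\ket{\bar x}\bra{\bar x})$, and the determinant formula $q(u-v-q)^2/27$ from \cref{reca} locates the candidates. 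There are two families.

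\textbf{Family $q=0$}, i.e.\ $\bar x=(a,b,0)$. The upper $2\times2$ block $\begin{pmatrix}u&\bar a b\\ a\bar b&v\end{pmatrix}$ is singular. Its kernel vector $y=(\bar a b,-u,0)$, up to scaling and conjugation bookkeeping, gives zero product vectors. These include $e_1\otimes e_2$ and $e_2\otimes e_1$, together with generic vectors supported on the span of $e_1,e_2,e_4,e_5$.

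\textbf{Family $u=v+q$ with $c\neq0$}. Here the full $3\times3$ matrix is singular, and solving the kernel gives vectors $y$ with a nonzero third component. This should supply the directions $e_3,e_6,e_7,e_8,e_9$.

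The concrete step is to pick about nine explicit vectors from these two families, for instance with $a,b,c\in\{0,\pm1,i\}$ subject to the constraint, and check that the resulting $9\times9$ matrix has full rank.

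I expect this spanning check to be the main obstacle, because the $q=0$ family lives in a four-dimensional subspace and everything else must come from the $u=v+q$ surface. If the span turns out to be deficient, I would fall back to the stronger form of the criterion. That form says $\widetilde W$ is optimal if and only if $\widetilde W-\lambda P$ fails to be a witness for every $P\succeq0$ and $\lambda>0$, and a $P$ that could be subtracted must be supported on the orthogonal complement of the span of zeros. I would then show directly that subtracting any such $P$, which is confined to a small explicit subspace, violates one of the minor conditions in the proof of \cref{prop:Wstar-not-optimal}.
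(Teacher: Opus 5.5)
Your proposal follows the paper's proof step for step. It has the same computation of $C_\Psi$, the same cancellation of the $\beta(\mu+\mu^{-1}-2)$ term, and the same monotonicity-plus-counterexample argument for strict inclusion; your $\mu=4$ works as well as the paper's $\mu=3$. It also uses the same spanning criterion on the two zero families $q=0$ and $|a|^2=|b|^2+|c|^2$ singled out by $\det=q(u-v-q)^2/27$.

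The spanning check you left open does succeed, so the fallback is unnecessary. Use the kernel vectors $(b,-a,0)$ and, for $a>0$, $(a,-\overline b,-c)$; the form $(b,-a,0)$ is safer than your $(\overline a b,-u,0)$, which vanishes at $a=0$.

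\begin{itemize}
\item The inputs $(a,b,c)=(1,0,0),(0,1,0),(1,1,0),(1,i,0)$ give zero product vectors spanning $e_2,e_4,e_1\pm e_5$.
\item The inputs $(1,0,e^{i\theta})$ with $\theta=0,\pi,\pi/2$ span $e_1-e_9,e_3,e_7$.
\item The inputs $(\sqrt2,1,1)$ and $(\sqrt2,1,i)$ give $e_6\pm e_8$ modulo the previous span.
\end{itemize}

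This is exactly the paper's nine-vector list $z_1,\dots,z_9$.
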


\begin{proof}
Since $\widetilde\Phi:=\Phi_*-\frac23\,\Psi$ is positive by the proof of
\cref{prop:Wstar-not-optimal}, its Choi matrix $\widetilde W$ is block-positive.
We first compute its action on the family $\rho_{\alpha,\beta,\mu,\nu}$.

A direct computation from $(I \otimes V^\dag ) |\psi \rangle \langle
\psi | (I\otimes V)$ gives \cref{eq:Cpsi-explicit}. Therefore,
\begin{equation}
\Tr\bigl(C_\Psi\,\rho_{\alpha,\beta,\mu,\nu}\bigr)
=\frac12\bigl[(\rho_{\alpha,\beta,\mu,\nu})_{22}
+(\rho_{\alpha,\beta,\mu,\nu})_{44}
+2 (\rho_{\alpha,\beta,\mu,\nu})_{24}\bigr]
=\frac{\beta}{2N_{\alpha,\beta,\mu,\nu}}\bigl(\mu+\mu^{-1}-2\bigr),
\end{equation}
where we used the explicit matrix form of $\rho_{\alpha,\beta,\mu,\nu}$ from
Sec.~\ref{sec:pptstates}. Substituting this into
$\widetilde W=W_*-\frac23 C_\Psi$ and using the formula for
$\Tr(W_*\rho_{\alpha,\beta,\mu,\nu})$ from the theorem in
Sec.~\ref{sec:pptstates} yields the exact cancellation of the
$\beta(\mu+\mu^{-1}-2)$ term and gives \cref{eq:Wtilde-action}.
In particular,
\begin{equation}
\Tr\bigl[\widetilde W\,\rho_{1,1,3,1}\bigr]
=\frac{-1}{3N_{1,1,3,1}}<0.
\end{equation}
Thus $\widetilde W$ is not positive semidefinite. Since it is block-positive,
it is an entanglement witness.

The detection condition for $W_*$ is given by \cref{condd}, whereas by
\cref{eq:Wtilde-action} the detection condition for $\widetilde W$ is
\begin{eqnarray}
(\alpha-1)^2+(\beta-1)^2+\alpha(\nu+\nu^{-1}-2)<1\;.\label{cond2}
\end{eqnarray}
Since $\mu+\mu^{-1}-2\ge0$,  \cref{condd} implies \cref{cond2}, so
$\mathcal D(W_*)\subseteq\mathcal D(\widetilde W)$. The inclusion is strict;
for instance, $(\alpha,\beta,\mu,\nu)=(1,1,3,1)$ satisfies the second
inequality but not the first. This proves \cref{eq:strict-inclusion-detect}. The improved detection formula \cref{eq:Wtilde-action}
also makes the $\mu$-independence transparent.

To prove optimality, we use the spanning criterion for entanglement
witnesses; see, e.g.,
Refs.~\cite{Lewenstein2000,ChruscinskiSarbicki2014}. Let
$y=(a,b,c)^T$, set Eq. (\ref{sett}), and recall $\widetilde\Phi(\ket
y\!\bra y)$ of \cref{reca}. A direct check shows that the following
two families lie in its kernel:
\begin{enumerate} \item if $c=0$, then
\begin{equation} x=(b,-a,0)^T \qquad\Longrightarrow\qquad
\widetilde\Phi(\ket y\!\bra y)\,x=0; \end{equation}
\item if
  $|a|^2=|b|^2+|c|^2$, then $a\neq0$ for nonzero $y$ and, after fixing
the global phase of $y$ so that $a>0$, 
\begin{equation} x=(a,-\overline b,-c)^T
\qquad\Longrightarrow\qquad \widetilde\Phi(\ket y\!\bra y)\,x=0.
\end{equation} 
\end{enumerate}
Hence, from the identity \cref{eq:ChoiM}, the corresponding product
vectors $\overline y\otimes x$ are annihilated by $\widetilde W$.

We now exhibit nine such product vectors whose span is all of
$\mathbb C^3\otimes\mathbb C^3$:
\begin{align}
z_1&=\overline{(0,1,0)}\otimes(1,0,0)=e_4,\\
z_2&=\overline{(1,0,0)}\otimes(0,-1,0)=-e_2,\\
z_3&=\overline{(1,1,0)}\otimes(1,-1,0)=e_1-e_2+e_4-e_5,\\
z_4&=\overline{(1,i,0)}\otimes(i,-1,0)=i\,e_1-e_2+e_4+i\,e_5,\\
z_5&=\overline{(1,0,1)}\otimes(1,0,-1)=e_1-e_3+e_7-e_9,\\
z_6&=\overline{(1,0,-1)}\otimes(1,0,1)=e_1+e_3-e_7-e_9,\\
z_7&=\overline{(1,0,i)}\otimes(1,0,-i)=e_1-i\,e_3-i\,e_7-e_9,\\
z_8&=\overline{(\sqrt2,1,1)}\otimes(\sqrt2,-1,-1)
   =2e_1-\sqrt2\,e_2-\sqrt2\,e_3+\sqrt2\,e_4-e_5-e_6+\sqrt2\,e_7-e_8-e_9,\\
z_9&=\overline{(\sqrt2,1,i)}\otimes(\sqrt2,-1,-i)
   =2e_1-\sqrt2\,e_2-i\sqrt2\,e_3+\sqrt2\,e_4-e_5-i\,e_6-i\sqrt2\,e_7+i\,e_8-e_9.
\end{align}
From $z_1$ and $z_2$ we obtain $e_4$ and $e_2$. Then
\begin{equation}
z_3+e_2-e_4=e_1-e_5,
\qquad
z_4+e_2-e_4=i(e_1+e_5),
\end{equation}
so $e_1,e_5$ lie in the span. Next,
\begin{equation}
z_5-e_1=-e_3+e_7-e_9,\qquad
z_6-e_1=e_3-e_7-e_9,\qquad
z_7-e_1=-i\,e_3-i\,e_7-e_9.
\end{equation}
Adding the first two gives $-2e_9$, hence $e_9$ belongs to the span; then
$e_3-e_7$ and $e_3+e_7$ are obtained from the remaining combinations, so
$e_3,e_7$ also belong to the span. Finally,
\begin{equation}
z_8-\bigl(2e_1-\sqrt2 e_2-\sqrt2 e_3+\sqrt2 e_4-e_5+\sqrt2 e_7-e_9\bigr)
=-(e_6+e_8),
\end{equation}
and
\begin{equation}
z_9-\bigl(2e_1-\sqrt2 e_2-i\sqrt2 e_3+\sqrt2 e_4-e_5-i\sqrt2 e_7-e_9\bigr)
=i(-e_6+e_8).
\end{equation}
Hence $e_6,e_8$ also belong to the span. Therefore
\begin{equation}
\spanop\{z_1,\dots,z_9\}=\mathbb C^3\otimes\mathbb C^3.
\end{equation}
Thus the set of product vectors annihilated by $\widetilde W$ spans the whole
space, so $\widetilde W$ has the spanning property. By the standard spanning
criterion, $\widetilde W$ is optimal.
\end{proof}

\begin{corollary}[Independence from the $\mu$-deformation]
\label{cor:mu-independence}
For every $\alpha,\beta,\nu>0$ and $\mu>0$ one has
\begin{equation}
\Tr(\widetilde W\,\rho_{\alpha,\beta,\mu,\nu})
=
\Tr(\widetilde W\,\rho_{\alpha,\beta,1,\nu}).
\end{equation}
In particular, the detection condition for $\widetilde W$ is independent of
$\mu$ and is given by \cref{cond2}.
\end{corollary}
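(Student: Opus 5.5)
The plan is to read the corollary directly off \cref{eq:Wtilde-action} of \cref{prop:Wtilde-optimal}, item 3, which already carries the whole computation. No new estimate is needed: I will specialize that formula at $\mu$ and at $\mu=1$ and compare.

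First I will separate the numerator from the normalization. Write
\begin{equation}
\Tr(\widetilde W\,\rho_{\alpha,\beta,\mu,\nu})=\frac{G(\alpha,\beta,\nu)}{3N_{\alpha,\beta,\mu,\nu}},
\qquad
G(\alpha,\beta,\nu):=(\alpha-1)^2+(\beta-1)^2-1+\alpha(\nu+\nu^{-1}-2).
\end{equation}
The point to record is that $G$ contains no $\mu$ at all. This is because the term $\beta(\mu+\mu^{-1}-2)$ present in $\Tr(W_*\rho_{\alpha,\beta,\mu,\nu})$ is cancelled exactly by $-\frac23\Tr(C_\Psi\rho_{\alpha,\beta,\mu,\nu})$. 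Equivalently, the unnormalized pairing $\Tr(\widetilde W A_{\alpha,\beta,\mu,\nu})=G/3$ is independent of $\mu$. As a cross-check I will rederive this from \cref{eq:Cpsi-explicit}: the $(e_2,e_4)$ block of $A_{\alpha,\beta,\mu,\nu}$ is $\frac\beta\mu(\mu e_2-e_4)(\mu e_2-e_4)^*$, and pairing it with $\frac13 I_9+\frac23(e_2e_4^*+e_4e_2^*)-\frac13(e_2+e_4)(e_2+e_4)^*$ gives $\frac13\beta(\mu+\mu^{-1})-\frac23\beta-\frac13\beta(\mu+\mu^{-1}-2)=0$. Hence this block contributes nothing, for every $\mu$.

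Next, and this is the step I expect to be the real obstacle, I must compare the denominators. The normalization is
\begin{equation}
N_{\alpha,\beta,\mu,\nu}=1+\alpha^2+\beta(\mu+\mu^{-1})+\alpha(\nu+\nu^{-1})+\beta^2,
\end{equation}
and it does depend on $\mu$. Since $\beta>0$, it is strictly minimized at $\mu=1$. So the displayed equality of normalized expectation values holds exactly when $G(\alpha,\beta,\nu)=0$ or $\mu=1$. In general one has instead
\begin{equation}
\Tr(\widetilde W\rho_{\alpha,\beta,\mu,\nu})=\frac{N_{\alpha,\beta,1,\nu}}{N_{\alpha,\beta,\mu,\nu}}\;\Tr(\widetilde W\rho_{\alpha,\beta,1,\nu}),
\end{equation}
with a strictly positive prefactor. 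In writing the argument I will therefore state the proved content precisely as the $\mu$-independence of $\Tr(\widetilde W A_{\alpha,\beta,\mu,\nu})$ together with the identity above. I will note that the literal equality must be read with this positive rescaling (or for the unnormalized operators $A$), rather than assert something false.

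Finally, the ``in particular'' clause follows without difficulty. Because $N_{\alpha,\beta,\mu,\nu}>0$, the sign of $\Tr(\widetilde W\rho_{\alpha,\beta,\mu,\nu})$ equals the sign of $G$. Thus $\widetilde W$ detects $\rho_{\alpha,\beta,\mu,\nu}$ if and only if $G<0$, which is exactly \cref{cond2}, and this condition does not involve $\mu$.
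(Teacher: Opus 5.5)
Your argument is correct and is essentially the paper's own. The paper gives no separate proof; it only remarks that \cref{eq:Wtilde-action} makes the $\mu$-independence transparent. You are also right that the displayed equality is false as literally written, because $N_{\alpha,\beta,\mu,\nu}$ depends on $\mu$: for example, $\Tr(\widetilde W\rho_{1,1,3,1})=-\tfrac{1}{25}$ while $\Tr(\widetilde W\rho_{1,1,1,1})=-\tfrac{1}{21}$. What actually holds is the $\mu$-independence of $\Tr(\widetilde W A_{\alpha,\beta,\mu,\nu})$, together with your positive-rescaling identity, and hence the $\mu$-independence of the detection condition \cref{cond2}.

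There is one arithmetic slip in your cross-check. On the $(e_2,e_4)$ block, $\widetilde W$ reduces to $\tfrac13(e_2e_4^*+e_4e_2^*)$, so that block contributes $-\tfrac23\beta$, not $0$; the term $\tfrac23(e_2e_4^*+e_4e_2^*)$ pairs to $-\tfrac43\beta$, not $-\tfrac23\beta$. This value is still $\mu$-independent and is exactly the $-2\beta/3$ part of $G/3$, so your conclusion is unaffected.
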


\begin{proposition}[Exact dual face of the endpoint witness]
\label{prop:dual-face-Wstar-appendix}
Let $S$ denote the cone of separable states on $\mathbb C^3\otimes\mathbb C^3$,
and define
\begin{equation}
W_*':=\{\sigma\in S:\ \Tr(W_*\sigma)=0\}.
\end{equation}
Then
\begin{equation}
W_*'=
\operatorname{cone}
\Bigl\{\,
\ket{z(\beta,\gamma)}\!\bra{z(\beta,\gamma)}:
|\beta|^2+|\gamma|^2=1
\Bigr\},
\end{equation}
where $\operatorname{cone}$ denotes the conic hull, i.e., finite nonnegative
linear combinations of the displayed projectors, and
\begin{equation}
z(\beta,\gamma)=\overline{y(\beta,\gamma)}\otimes x(\beta,\gamma),
\end{equation}
with
\begin{equation}
y(\beta,\gamma)=\frac1{\sqrt2}(1,\beta,\gamma)^T,
\qquad
x(\beta,\gamma)=(1,-\overline{\beta},-\gamma)^T.
\end{equation}
\end{proposition}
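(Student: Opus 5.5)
Since $W_*$ is block-positive, the functional $\sigma\mapsto\Tr(W_*\sigma)$ is nonnegative on $S$. The set $W_*'$ is therefore a face of $S$. Because $S$ is the conic hull of pure product projectors $\ket{p\otimes q}\!\bra{p\otimes q}$, and each of these contributes a nonnegative term, a separable $\sigma=\sum_k\lambda_k\ket{p_k\otimes q_k}\!\bra{p_k\otimes q_k}$ lies in $W_*'$ if and only if every product vector in the decomposition satisfies $\bra{p_k\otimes q_k}W_*\ket{p_k\otimes q_k}=0$. The proof therefore reduces to classifying all product vectors annihilated by the quadratic form of $W_*$. By \eqref{eq:ChoiM} we have $\bra{\overline{y}\otimes x}W_*\ket{\overline y\otimes x}=\bra{x}\Phi_*(\ket y\!\bra y)\ket{x}$, so we need all pairs $(y,x)$ with $y\neq0$ such that $x\in\ker\Phi_*(\ket y\!\bra y)$. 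Since $\Phi_*(\ket y\!\bra y)\succeq0$, vanishing of the quadratic form is equivalent to $\Phi_*(\ket y\!\bra y)x=0$.

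The first step is to use \eqref{eq:uniform-rank-one-image} with $w=z=2/3$. The lower $2\times2$ block is $\frac s3 I_2$, which is invertible for $y\neq0$. Hence the rank of $\Phi_*(\ket y\!\bra y)$ is at least $2$, and the kernel is nontrivial exactly when the Schur complement vanishes. By Lemma~\ref{lem:uniform-scalar-reduction} this means $S=\frac13-\frac43u_1(u_2+u_3)=\frac13-\frac43u_1(1-u_1)=0$, i.e.\ $u_1=1/2$, or $|y_1|^2=|y_2|^2+|y_3|^2$. This is the same equality case as in the proof of Proposition~\ref{prop:positivity-unbalanced}, with $t=1/2$ and the endpoint analysis degenerate because $w=z$. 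The kernel is then one-dimensional.

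The second step is to compute this kernel vector. After normalizing $y$ and fixing its global phase so that $y_1=1/\sqrt2>0$, we can write $y=\frac1{\sqrt2}(1,\beta,\gamma)$ with $|\beta|^2+|\gamma|^2=1$. Solving the last two rows of \eqref{eq:uniform-rank-one-image} gives $x_2=-\frac{3w}{s}y_1\overline{y_2}\,x_1$ and $x_3=-\frac{3z}{s}\overline{y_1}y_3\,x_1$. With $s=1$ and $w=z=2/3$ these become $x_2=-\overline\beta x_1$ and $x_3=-\gamma x_1$, so $x\propto(1,-\overline\beta,-\gamma)$. The overall phase and scale of $y$ and $x$ only rescale $\ket{z}\!\bra{z}$, so every annihilated product vector is a positive multiple (up to phase) of $z(\beta,\gamma)$. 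This gives the inclusion $W_*'\subseteq\operatorname{cone}\{\ket{z(\beta,\gamma)}\!\bra{z(\beta,\gamma)}\}$. For the reverse inclusion, each such projector is separable and satisfies $\Tr(W_*\ket z\!\bra z)=0$ by construction, and both properties pass to conic combinations.

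The main point needing care is the first reduction: that membership in the face forces every product vector in every separable decomposition to be annihilated. This is a short positivity argument, but it must be stated precisely. The second point is the bookkeeping between $\overline y$ and $x$ in the Choi identity, which requires checking the phase and conjugation conventions against the vectors $z_i$ used in Proposition~\ref{prop:Wtilde-optimal}. Finite conic combinations suffice, by Carathéodory's theorem in the finite-dimensional space of Hermitian matrices.
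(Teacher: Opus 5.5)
Your proposal is correct and is essentially the paper's proof. Both arguments reduce, via the Choi pairing, to finding $x\in\ker\Phi_*(\ket y\!\bra y)$, locate the degenerate inputs by $|y_1|^2=\frac12$, and solve for the kernel vector $(1,-\overline\beta,-\gamma)^T$. You read the degeneracy condition off the Schur complement of \cref{lem:uniform-scalar-reduction}, while the paper uses the determinant $\frac1{27}(1-4|a|^2(1-|a|^2))$; this is the same quantity up to the factor $\det(\frac13 I_2)$. You also make explicit two steps the paper leaves implicit: the kernel is exactly one-dimensional because the lower $2\times2$ block is invertible, and every product term of a separable decomposition of $\sigma\in W_*'$ must itself be annihilated.
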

\begin{proof}
Let $\Phi_*=\Phi_{2/3,2/3}$ be the positive map whose Choi matrix is $W_*$. For
any product vector $\overline y\otimes x$, the Choi pairing gives
\begin{equation}
\langle \overline y\otimes x | W_* | \overline y\otimes x\rangle
=\langle x |\Phi_*(\ket{y}\!\bra{y}) | x\rangle.
\end{equation}
Hence the zero product vectors of $W_*$ are those for which
$x\in\ker\Phi_*(\ket{y}\!\bra{y})$. Writing $y=(a,b,c)^T$ with
$|a|^2+|b|^2+|c|^2=1$, one has
\begin{equation}
\Phi_*(\ket{y}\!\bra{y})
=\frac13
\begin{pmatrix}
1 & 2\overline a\,b & 2a\,\overline c\\
2a\,\overline b & 1 & 0\\
2\overline a\,c & 0 & 1
\end{pmatrix}.
\end{equation}
The determinant computation gives
\begin{equation}
\det \Phi_*(\ket{y}\!\bra{y})
=\frac1{27}\bigl(1-4|a|^2|b|^2-4|a|^2|c|^2\bigr)
=\frac1{27}\bigl(1-4|a|^2(1-|a|^2)\bigr),
\end{equation}
which vanishes if and only if $|a|^2=\frac12$. Up to a global
phase, every such $y$ is of the form
\begin{equation}
y(\beta,\gamma)=\frac1{\sqrt2}(1,\beta,\gamma)^T,
\qquad |\beta|^2+|\gamma|^2=1.
\end{equation}
For this choice, one checks directly that
\begin{equation}
\Phi_*\bigl(\ket{y(\beta,\gamma)}\!\bra{y(\beta,\gamma)}\bigr)\,
x(\beta,\gamma)=0,
\qquad x(\beta,\gamma)=(1,-\overline\beta,-\gamma)^T.
\end{equation}
Therefore the zero product vectors of $W_*$ are precisely
\begin{equation}
z(\beta,\gamma)=\overline{y(\beta,\gamma)}\otimes x(\beta,\gamma),
\qquad |\beta|^2+|\gamma|^2=1.
\end{equation}
Since the dual face $W_*'$ consists exactly of the separable states annihilated
by $W_*$, it is the conic hull of the corresponding rank-one product
projectors.
\end{proof}

\begin{remark}[What the dual-face description does and does not imply]
\label{rem:Wstar-not-exposed}
The explicit description of $W_*'$ should be viewed in the context of
the general theory of faces of the separable cone induced by
PPT-related geometry; compare
Refs.~\cite{Kye2011Faces,FacialStructures2012}. Proposition~\ref{prop:dual-face-Wstar-appendix}
determines the dual face $W_*'=\{\sigma\in
S:\Tr(W_*\sigma)=0\}$ exactly inside the separable cone $S$. This
identifies all separable states annihilated by the witness
$W_*$. However, map-side exposedness is a statement about the face
generated by $W_*$ in the cone of positive maps, and would require the
bidual equality
\begin{equation}
(W_*')'=\mathbb R_+W_*.
\end{equation}
For general sufficient criteria ensuring that a positive map generates an exposed
ray, see Ref.~\cite{ChruscinskiSarbicki2012}. Thus the exact description of
$W_*'$ does not by itself prove exposedness.

In the present case one can say more: $W_*$ is in fact \emph{not} exposed.
Indeed, exposed rays are extremal, and any extremal positive map that is not
completely positive is necessarily optimal~\cite{Lewenstein2000,ChruscinskiSarbicki2014}.
But \cref{prop:Wstar-not-optimal} shows that $W_*$ is not optimal, while $W_*$
is not completely positive since $(2/3,2/3)$ lies outside the completely
positive square $[0,1/3]\times[0,1/3]$. Hence $W_*$ is neither extremal nor
exposed.
\end{remark}

\section{Conclusion}
We have introduced an exactly solvable family of sparse unital and
trace-preserving positive maps $\Phi(w,z)$ on qutrits.  The two parameters
control two distinct coherence channels of the associated Choi witness, and the
absence of a third coherence sector makes the construction analytically
tractable.  In the uniform bistochastic slice the full two-parameter phase
diagram is exact: positivity occupies $[0,2/3]\times[0,2/3]$, complete positivity
occupies $[0,1/3]\times[0,1/3]$, and indecomposability begins precisely beyond
the quarter-circle boundary \eqref{uno}.

The same sparse structure also gives explicit PPT entangled states detected by
the corresponding witnesses.  In particular, the endpoint witness
$W_*=W(2/3,2/3)$ detects a four-parameter family of PPT edge states of rank type
$(5,5)$, with an analytic detection condition.  The range and kernel constraints
for these states determine a one-dimensional face of the PPT cone, proving that
the constructed PPT-state rays are exposed.  Thus the family provides a concrete
qutrit setting in which PPT entanglement, edge-state geometry, and exposed faces
can be displayed explicitly.

On the witness side, the endpoint $W_*$ separates indecomposability from
optimality: it detects PPT entanglement but is not optimal.  We identified a
simple completely positive subtraction that preserves positivity and yields an
optimal refinement $\widetilde W$.  This refinement has a strictly larger
analytic detection region on the constructed PPT-edge family, showing explicitly
how optimization improves the witness while preserving tractability.

The family can be used as a benchmark for numerical and variational searches for
positive maps and PPT entangled states, since the exact positivity,
decomposability, and detection thresholds are known.  Natural extensions include
perturbations away from the uniform bistochastic slice, higher-dimensional
sparse constructions, and the search for sparse families in which more coherence
sectors can be tuned while retaining an analytic PPT-state geometry.

\section*{Data Availability}
No external data were used in this work.  The figures and parameter regions are
generated directly from the analytic expressions given in the text.

\end{document}